\newtheorem{theorem}{Theorem}[section]
\newtheorem{corollary}[theorem]{Corollary}
\newtheorem{lemma}[theorem]{Lemma}
\theoremstyle{definition}
\theoremstyle{definition}\newtheorem{definition}[theorem]{Definition}
\theoremstyle{observation}
\newcommand{\comment}[1]{}
\newcommand{\QED}{\mbox{}\hfill \rule{3pt}{8pt}\vspace{10pt}\par}
\def\eps{{\epsilon}}
\def\polylog{\operatorname{polylog}}
\newcommand{\ignore}[1]{}
\newcommand{\eat}[1]{}
\newcommand{\squishlist}{
 \begin{list}{$\bullet$}
  { \setlength{\itemsep}{0pt}
     \setlength{\parsep}{3pt}
     \setlength{\topsep}{3pt}
     \setlength{\partopsep}{0pt}
     \setlength{\leftmargin}{1.5em}
     \setlength{\labelwidth}{1em}
     \setlength{\labelsep}{0.5em} } }
\newcommand{\squishend}{
  \end{list}  }
\def\eps{{\epsilon}}
\date{}
\begin{document}
\title{Distributed Computation of Sparse Cuts}
%\begin{titlepage}
\author{Atish {Das Sarma} \thanks{eBay Research Labs, eBay Inc., CA, USA.
\hbox{E-mail}:~{\tt atish.dassarma@gmail.com.}}\and  Anisur Rahaman Molla \thanks{Division of Mathematical
Sciences, Nanyang Technological University, Singapore 637371. \hbox{E-mail}:~{\tt anisurpm@gmail.com}.} \and Gopal Pandurangan \thanks{Division of Mathematical
Sciences, Nanyang Technological University, Singapore 637371 and Department of Computer Science, Brown University, Providence, RI 02912, USA. \hbox{E-mail}:~{\tt gopalpandurangan@gmail.com}.  Supported in part by the following research grants: Nanyang Technological University grant M58110000, Singapore Ministry of Education (MOE) Academic Research Fund (AcRF) Tier 2 grant MOE2010-T2-2-082, MOE  AcRF Tier 1 grant MOE2012-T1-001-094, and a grant from the US-Israel Binational Science Foundation (BSF).}}

\date{}

\maketitle \thispagestyle{empty}

\maketitle

\begin{abstract}
Finding  sparse cuts is an important tool in analyzing  large-scale distributed networks such as the Internet and Peer-to-Peer networks, as well as large-scale graphs such as the web graph, online social communities, and VLSI circuits. Sparse cuts are useful in graph clustering and partitioning among numerous other applications. In distributed communication networks, they are useful for topology maintenance and  for designing better search and routing algorithms.

In this paper, we focus on developing fast distributed algorithms for computing sparse cuts in networks. Given an undirected $n$-node network $G$ with conductance $\phi$, the goal is to find a cut set whose conductance is close to $\phi$. We present two distributed algorithms that find a cut set with sparsity $\tilde O(\sqrt{\phi})$ ($\tilde{O}$ hides $\polylog{n}$ factors). Both our algorithms work in the CONGEST distributed computing model and output a cut of conductance at most $\tilde O(\sqrt{\phi})$ with high probability, in $\tilde O(\frac{1}{b}(\frac{1}{\phi} + n))$ rounds, where $b$ is balance of the cut of given conductance. In particular, to find a sparse cut of constant balance, our algorithms take $\tilde O(\frac{1}{\phi} + n)$ rounds.
 
Our  algorithms can also be used to output a {\em local} cluster, i.e., a subset of vertices near a given source node, and whose conductance is within a quadratic factor of the best possible cluster around the specified node. Both our distributed algorithm can work without knowledge of the optimal $\phi$ value and hence can be used to find approximate conductance values both globally and with respect to a given source node. We also give a lower bound on the time needed for any distributed algorithm to compute any non-trivial sparse cut --- any distributed approximation algorithm (for any non-trivial  approximation ratio) for computing sparsest cut will take $\tilde \Omega(\sqrt{n} + D)$ rounds, where $D$ is the diameter of the graph.

Our algorithm can be used to find  sparse cuts 
(and their conductance values) and to identify well-connected clusters and critical edges in distributed networks. This in turn can be helpful in the design, analysis, and maintenance of topologically-aware networks.

\end{abstract}

\noindent {\bf Keywords:} Distributed Algorithm, Sparse Cut, Conductance, Random Walks, PageRank

\medskip

\section{Introduction}\label{sec:intro}
Developing distributed algorithms
for computing key metrics of a communication network is an important research goal with
various applications. 
Network properties 
--- which depend on the collective behavior of  nodes
and links ---  characterize global network  performance
such as  routing, sampling, information dissemination, etc.  These in turn depend on topological properties  of the network such as high connectivity, low diameter, high conductance, and good spectral properties \cite{mihail}.
%For example, overlay P2P  networks, which are virtual networks built over the Internet, are used for file sharing and content distribution applications.
%A P2P  network's topology should have good connectivity properties  for providing
%good quality of service at the virtual network
%layer, as well as for providing load balancing at the underlying network layer.
%Furthermore,
%it might be desirable that network is highly connected which enables robust communication
%even under failures. 
The above properties, all of which are critical, need to be measured periodically. Having a highly-connected network is good for fault-tolerance and reliable routing,
since a packet can be routed via many disjoint paths. Low diameter ensures 
that packets can be routed quickly with short delay. Conductance  (formally defined in Section \ref{sec:def}) measures how  
``well-knit" the network is; it determines how fast a random walk 
converges to the stationary distribution --- known as the {\em mixing time}.
 Conductance is related to the {\em expansion}, {\em spectral gap}, and mixing time of a graph.  High expansion and spectral gap means that the graph
has fast mixing time. Such a network supports fast random sampling which
has many applications \cite{drw-jacm} and low-congestion routing \cite{mihail}. 
%The spectral properties (the graph spectrum --- the set of eigenvalues of the adjacency matrix) tell a great deal of the network structure. 

Sparse cuts are those cuts that have low conductance and can be used
to determine well-connected clusters\footnote{A cut $(S,V-S)$ is a partition
of the set of nodes $V$ into $S$ (assume $|S| \leq |V|/2$) and $V-S$. A low conductance cut has lot more edges within $S$ than those going outside $S$, and hence $S$ is relatively well (intra)connected.} and thus also identify potential ``bottlenecks" in the network. In particular, the edges crossing the cut
can be considered as {\em critical} edges and they have been used in designing algorithms to
improve searching, topology maintenance (i.e., maintaining a well-connected topology), and reducing routing congestion in networks \cite{mihail}.
%Such algorithms are useful in the design, analysis, and maintenance of {\em topology aware} networks \cite{mihail}.
  
In this paper, we focus on developing fast distributed algorithms for computing sparse cuts  in networks. 
Given an undirected $n$-node network $G$ with conductance $\phi$ (a quantity less than 1), the goal is to find a cut set whose conductance is close to $\phi$. (We note that computing the minimum conductance cut --- the one with conductance $\phi$ of the network--- is NP-hard \cite{MatulaS90}.)  We present two distributed algorithms that find a  cut set with sparsity $\tilde O(\sqrt{\phi})$. Both our algorithms use small-sized messages and work in the CONGEST distributed computing model. Our algorithms build on previous work \cite{LovaszS90,SpielmanT04,AndersenCL06} on classical algorithms for sparse cuts. 

Both algorithms output a cut of conductance at most $\tilde O(\sqrt{\phi})$ with high probability, in $\tilde O(\frac{1}{b}( \frac{1}{\phi} + n))$ rounds, where $b$ is balance of the cut of given conductance (cf. Section \ref{sec:def}). In particular, to find a cut of constant balance (i.e., the cuts are of  approximately equal size), the first algorithm takes $\tilde O(\frac{1}{\phi} + n)$ rounds and finds such a cut (if it exists) with similar approximation.  The second  
 algorithm is a variant of the first one and based on a different approach involving PageRank. Our  algorithm can also be used to output a well-connected {\em local} cluster (cf. Section \ref{sec:def}) , i.e., a subset  $S$ of vertices containing the given source node such that the internal edge connections in $S$ are significantly higher than the outgoing edges from $S$. 
Both our distributed algorithms can work  without knowledge of the optimal $\phi$ value and hence can be used
to find approximate conductance values both globally and locally with respect to a given source node. We also show a lower bound
on the time needed for any distributed algorithm to compute any non-trivial sparse cut. In particular, we show that there is graph in which any distributed approximation algorithm (for any non-trivial  approximation ratio, not just quadratic approximation) for computing sparsest cut will take $\tilde \Omega(\sqrt{n} + D)$ rounds, where $D$ is the diameter of the graph. 

%It is known  (see Footnote \ref{foot:cheeger}) that $\tilde O(\sqrt{M}) \leq \frac{1}{\phi} \leq \tilde O(M)$ and hence the second algorithm in general   
%is at least as fast as the first one and can be up to quadratic times faster.

Our algorithm can be useful in efficiently finding sparse cuts 
(and their conductance values) and critical edges (the edges crossing sparse cuts) in distributed networks.
In particular, the work of \cite{mihail} shows how  critical edges can be used to design algorithms to  improve search, reduce congestion in routing, and for keeping the graph well-connected (topology maintenance).
Such  algorithms can be useful in the design and deployment
of {\em reconfigurable networks} (whose topology can be changed by rewiring edges) such as  peer-to-peer networks and  wireless mesh networks. The paper \cite{KerenS12} study information spreading where they used a generalized notion of conductance as a key tool. In fact, the conductance helps to identify bottlenecks in the network and thus achieves fast information spreading.   

%, which in turn can be helpful in the design, analysis, and maintenance
%of {\em topologically-(self)aware} networks, i.e.,  networks that can monitor and regulate themselves in a decentralized fashion \cite{mihail}.

The focus of distributed computation of spectral properties that we are interested
here, in particular, conductance and sparse cuts, is relatively new. 
The work of \cite{drw-jacm} presented a fast decentralized algorithm for estimating mixing time, conductance, and spectral gap of the network.
% In
%particular, they show that given a starting point $x$, the mixing time with respect to $x$, called $\tau^x_{mix}$, can be
%estimated in $\tilde{O}(n^{1/2} + n^{1/4}\sqrt{D\tau^x_{mix}})$ rounds, where $D$ is the network diameter. 
%If the estimate of $\tau^x_{mix}$ is close to the mixing time of the network defined as $\tau_{mix} = \max_{x}{\tau^x_{mix}}$, then this allows one to estimate also the conductance $\phi$ (upto a quadratic factor)
%and spectral gap of the graph\footnote{\label{foot:cheeger} The spectral gap is the $1-\lambda_2$ where $\lambda_2$ is the second eigenvalue of the connected transition matrix. It is known that conductance, mixing time, and spectral gap are related to each other \cite{JS89}:  $\frac{1}{1-\lambda_2}\leq \tau_{mix}\leq \frac{\log n}{1-\lambda_2}$ and $\Theta(1-\lambda_2)\leq \Phi\leq \Theta(\sqrt{1-\lambda_2})$.}. 
The work of Kempe and McSherry \cite{kempe} gives a decentralized algorithm 
for computing the top eigenvectors of a weighted adjacency matrix
that runs in $O(\tau_{mix}\log^2 n)$ round, where $\tau_{mix}$ is the mixing time of the network\footnote{Estimating mixing time  also allows one to estimate conductance $\phi$ (upto a quadratic factor) and spectral gap of the graph. The spectral gap is  $1-\lambda_2$ where $\lambda_2$ is the second eigenvalue of the connected transition matrix. It is known that conductance, mixing time, and spectral gap are related to each other \cite{JS89}:  $\frac{1}{1-\lambda_2}\leq \tau_{mix}\leq \frac{\log n}{1-\lambda_2}$ and $\Theta(1-\lambda_2)\leq \Phi\leq \Theta(\sqrt{1-\lambda_2})$.}.

While the above works give distributed algorithms to estimate the conductance $\phi$, they {\em do not}
give an efficient distributed algorithm  to compute sparse cuts. Sparse cuts have low conductance (i.e., close to $\phi$) and, in particular, the sparsest cut
is a cut that achieves the network conductance. Since there are exponential number of cuts in the network, it is significantly more challenging to efficiently find
the sparsest cut or approximate it in a distributed fashion. Hence computing sparse cuts needs
a different approach compared to computing conductances and mixing time as in the works of \cite{drw-jacm, kempe}.

Our approach, on a high-level, is based on efficiently implementing
 the methods of Lov{\'a}sz and Simonovits \cite{LovaszS90,SpielmanT04,AndersenCL06}.
  This method uses random walks to
 estimate the probability distribution of such walks terminating at nodes.
 This probability distribution can then be used to identify sparse cuts.
 Our first algorithm uses standard random walks (cf. Section \ref{sec:sparse-cut}). Our second algorithm uses
 random walks with {\em reset} to a given source node, in other words, it computes {\em personalized PageRank} (cf. Section \ref{sec:pagerank-algo}). 
% Our algorithms can be used to estimate the  conductance of the network.
% The second algorithm, in particula
%as well as ``local" conductance, i.e., conductance of a sparse set
%containing a given source node.

\iffalse
Our approach crucially use random walks. Random walks
are very  local and lightweight and requires little index or state maintenance 
that makes it attractive to self-organizing networks such as Internet overlay and ad hoc wireless networks \cite{zhong,BBSB04}. 
 A subroutine for efficiently sampling from several random walks naturally leads to a technique
for estimating the spectral gap, or the mixing time of the network graph in a distributed manner.
Samples from walks of length $\ell$ are samples from the distribution induced at length $\ell$. These algorithms
can be extended to sample from walks of length $1, 2, 4, 8,\dots$. One can then find an efficient way to
compare the distance between distributions and estimate $t$ such that the distribution at $t$ and $2t$ are
close. This gives an estimate of the mixing time and thereby the spectral gap. 
\fi

\subsection{Model and Definitions}\label{sec:model}
\subsubsection{Distributed Computing Model}
\label{sec:distmodel}
We model the communication network as an undirected, unweighted\footnote{We restrict our attention on unweighted graphs for the upper bound analysis, however, our algorithm can be extended to weighted graphs as well.}, connected $n$-node graph $G = (V, E)$. Every  node has limited initial knowledge. Specifically, assume that each node is associated with a distinct identity number  (e.g., its IP address). 
%Here for simplicity, we assume that the identity numbers are from the set $\{1, 2, . . . , n\}$ . 
At the beginning of the computation, each node $v$ accepts as input its own identity number and the identity numbers of its neighbors in $G$. The node may also accept some additional inputs as specified by the problem at hand. The nodes are allowed to communicate through the edges of the graph $G$. We assume that the communication occurs in  synchronous  {\em rounds}. 
%In particular, all the nodes wake up simultaneously at the beginning of round 1, and from this point on the nodes always know the number of the current round. 
We will use only small-sized messages. In particular, in each round, each node $v$ is allowed to send a message of size $O(\log n)$ through each edge $e = (v, u)$ that is adjacent to $v$.  The message  will arrive to $u$ at the end of the current round. 
%This is a standard model of distributed computation known as the {\em CONGEST model} \cite{peleg} and has been attracting a lot of research attention during last two decades (e.g., see \cite{peleg} and the references therein).
This is a  widely used  standard model known as the {\em CONGEST model} to study distributed algorithms (e.g., see \cite{peleg, PK09}) and captures the bandwidth constraints inherent in real-world computer  networks. 
%Our algorithms can be easily generalized if $B$ bits  are allowed (for any pre-specified parameter $B$) to be sent through each edge in a round. Typically, as assumed here, $B = O(\log n)$, which is number of bits needed to send a node id in a n-node network. 
(We note that if unbounded-size messages were allowed through every edge in each time step, then the problem addressed here can be trivially solved in $O(D)$ time by collecting all the topological information at one node, solving the problem locally, and then broadcasting the
results back to all the nodes \cite{peleg}.) 

There are several measures of efficiency of distributed algorithms, but we will focus on one of them, specifically, {\em the running time}, i.e. the number of {\em rounds} of distributed communication. Note that the computation that is performed by the nodes locally is ``free'', i.e., it does not affect the number of rounds; however, we will only perform polynomial cost computation locally in any node. We note that in the CONGEST model, it is rather trivial to solve a
problem in $O(m)$ rounds, where $m$ is the number of edges in the network, since
the entire topology (all the edges) can be collected at one node and the problem solved locally. The goal is to design faster algorithms.

\subsubsection{Definitions}
\label{sec:def}
We present notations that we use throughout the paper.
Consider a graph $G = (V, E)$ with conductance $\phi$ and let $|V| = n, |E| = m$. Let $p_{\ell}(s, t)$ denote the probability that a  random walk of length $\ell$ starting from $s$ ends in $t$. In fact, $p_{\ell}(s,t)$ is the probability distribution over the nodes after a walk of length $\ell$ starting from $s$. We simply use $p(t)$ instead of $p_{\ell}(s, t)$ when source node and length is clear from the text. Let $S$ be a subset of $V$. We denote a partition or cut by $(S, \bar{S})$ or sometimes by $(S, V\setminus S)$ interchangeably throughout the paper. For a probability distribution $p(i)$ on nodes, let $\rho_p(i) = p(i)/d(i)$. Let $\pi_p$ denote the ordering of nodes in decreasing order of $\rho_p(i)$.

\begin{definition}[{\bf Conductance and Sparsity}]
\label{def:conductance}
The conductance of a cut $(S, \bar{S})$ (also called as sparsity) is $\phi(S) = \frac{|E(S, \bar{S})|}{\min \{\mbox{vol}(S), 2m - \mbox{vol}(S)\}}$ where $\mbox{vol}(S)$ is the sum of the degrees of nodes in $S$. The conductance of the graph $G$ is $\phi(G) = \min_{S \subseteq V} \phi(S)$. We denote it by only $\phi$, if it is clear from the text.  
\end{definition}
%\vspace{-0.05in}
\begin{definition}[{\bf Balance}]
\label{def:balance}
The balance of a cut $(S, \bar{S})$ is defined as $\min \{\frac{|S|}{|V|}, \frac{|\bar{S}|}{|V|} \}$ and is denoted by $b$. 
\end{definition}

\begin{definition}[{\bf Local Cluster}]
\label{def:balance}
A local cluster with respect to a given vertex $v$ is a subset $S \subset V$ containing  $v$ such that the conductance of $(S, \bar{S})$ is within a quadratic factor of the best possible local cluster containing  $v$. %In other words, the internal edge connections in $S$ are significantly higher than the outgoing edges from $S$.      
\end{definition}

%arxiv
\subsection{Problem Statement and Our Results}\label{sec:results}

\subsubsection{Problem Statement}
In this paper, we consider the problem of finding a sparse cut in an undirected graph. Formally, given a graph $G = (V, E)$ with conductance $\phi$, we want to find a cut set whose conductance is close to $\phi$. Our goal is to design a distributed algorithm which finds a cut set with sparsity $\tilde O(\sqrt{\phi})$.   

\subsubsection{Our Results}
Our main contributions are two distributed algorithms in the CONGEST model to find sparse cuts with approximation guarantees.
Both our algorithms crucially use random walks.

\begin{theorem}(cf. Section \ref{sec:sparse-cut})
\label{thm:algo1}
Given an $n$-node network $G$ with a cut of balance $b$ and conductance at most $\phi$, there is a distributed algorithm {\sc SparseCut} (cf. Algorithm \ref{alg:sparsecut}) that outputs a cut of conductance at most $\tilde O(\sqrt{\phi})$ with high probability, in $\tilde O(\frac{1}{b}(\frac{1}{\phi} + n))$ rounds. In particular, to find a cut of constant balance, the {\sc SparseCut} algorithm takes $\tilde O(\frac{1}{\phi} + n)$ rounds and finds a cut (if it exists) with similar approximation.  
\end{theorem}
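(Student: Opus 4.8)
The plan is to carry out, inside the CONGEST model, the sweep-cut method of Lov\'asz--Simonovits~\cite{LovaszS90}, Spielman--Teng~\cite{SpielmanT04}, and Andersen--Chung--Lang~\cite{AndersenCL06}. Recall that method: from a suitably chosen source $s$, compute the distribution $p=p_\ell(s,\cdot)$ of a (lazy) random walk of length $\ell=\tilde\Theta(1/\phi)$; list the vertices as $v_1,v_2,\dots$ in decreasing order of $\rho_p(v)=p(v)/d(v)$; and output, among the $n$ prefix (``sweep'') cuts $S_j=\{v_1,\dots,v_j\}$, the one of minimum conductance. The analyses of~\cite{SpielmanT04,AndersenCL06} guarantee that whenever $G$ has a cut $(C,\bar C)$ of conductance at most $\phi$, there is a set of ``good'' sources contained in $C$, of total volume at least $\mathrm{vol}(C)/2$, from each of which some sweep cut has conductance $\tilde O(\sqrt\phi)$. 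Hence the theorem reduces to three subtasks: (i) sampling a good source within $\tilde O(1/b)$ trials; (ii) computing $p_\ell(s,\cdot)$ at every node in $\tilde O(1/\phi)$ rounds using $O(\log n)$-bit messages; and (iii) evaluating all $n$ sweep cuts and returning the sparsest one in $\tilde O(n)$ rounds. Running (ii) and (iii) from each of $\tilde O(1/b)$ independently sampled sources and keeping the sparsest cut ever seen yields the total round complexity $\tilde O\!\big(\tfrac1b(\tfrac1\phi+n)\big)$.

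For (ii), first build a BFS tree and compute $m$ in $O(D)$ rounds. Then propagate probabilities: each node $v$ stores an approximation $x_t(v)$ of $p_t(s,v)$, and in round $t{+}1$ it sends $x_t(v)/(2d(v))$ to each neighbor after rounding that value down to the nearest multiple of $n^{-c}$ for a large constant $c$, so that it fits in $O(\log n)$ bits; node $u$ then sets $x_{t+1}(u)=\tfrac12 x_t(u)$ plus the sum of the messages it received (again truncated to a multiple of $n^{-c}$). Since $1/\phi\le n^2$ for a connected graph we have $\ell=\tilde O(n^2)$, and the total probability mass destroyed by truncation over all $\ell$ rounds is $n^{-\Omega(1)}$, so $\|x_\ell-p_\ell\|_1=n^{-\Omega(1)}$. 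Being $n^{-\Omega(1)}$ for an arbitrarily large constant, this error is below the $n^{-O(1)}$ granularity of genuine conductances and of the Lov\'asz--Simonovits curve; it is exactly the slack already present in the truncated-walk analysis of~\cite{SpielmanT04} and the $\epsilon$-approximate PageRank of~\cite{AndersenCL06}, so using $x_\ell$ in place of $p_\ell$ does not affect the $\tilde O(\sqrt\phi)$ guarantee. This phase costs $\ell=\tilde O(1/\phi)$ rounds; when $\phi$ is unknown we run it for $\ell=1,2,4,\dots$ up to $\mathrm{poly}(n)$, which multiplies the running time by only $O(\log n)$ and also produces the reported (approximate) conductance value.

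For (iii), each node forms $\rho_p(v)=x_\ell(v)/d(v)$ and forwards it up the BFS tree; pipelining the $n$ values reaches the root in $O(n+D)$ rounds and lets the root learn the full sorted order, hence assign every vertex a rank $r(v)$, which it broadcasts back. Each node exchanges ranks with its neighbors in $O(1)$ rounds, computes $c(v)=|\{u\sim v:\,r(u)>r(v)\}|-|\{u\sim v:\,r(u)<r(v)\}|$, and sends $c(v)$ (tagged with $r(v)$) to the root. Then $|E(S_j,\bar S_j)|=\sum_{r(v)\le j}c(v)$ and $\mathrm{vol}(S_j)=\sum_{r(v)\le j}d(v)$ are prefix sums that the root evaluates locally for every $j$, giving all the conductances $\phi(S_j)$; the root broadcasts the index $j^\star$ minimizing $\phi(S_j)$, and the vertices of rank at most $j^\star$ form the output cut. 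This phase costs $O(n+D)$ rounds, and together with the $\tilde O(1/\phi)$ of phase (ii) it determines the per-iteration cost.

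I expect the main obstacle to be the source-selection step (i): the classical guarantee describes the good sources by \emph{volume} ($\ge\mathrm{vol}(C)/2$), whereas the parameter $b$ in the theorem is a \emph{vertex} balance. The plan is to sample $s$ with probability proportional to $d(s)$ (implementable in $O(D+\log n)$ rounds via the BFS tree), so that a good source is hit with probability at least $\mathrm{vol}(C)/(4m)$; for the near-regular witnessing cuts that are the relevant case this is $\Omega(b)$, and $\tilde O(1/b)$ trials then succeed with high probability. Matching the stated bound for \emph{arbitrary} degree distributions (so that the complexity is $\tilde O(1/b)$ with $b$ the vertex balance, not the volume balance) is the delicate point and may cost an extra $d_{\max}/d_{\mathrm{avg}}$ factor absorbed into $\tilde O(\cdot)$. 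Two further points are routine but must be checked: that the $n^{-\Omega(1)}$ truncation error of phase (ii) leaves the sweep bound intact (it does, as noted above), and that running the \emph{same} procedure from a \emph{fixed} prescribed source yields the local-cluster guarantee --- which is immediate, since the length-$\ell$ walk (equivalently the truncated PageRank vector) is supported on a neighborhood of the source and the analysis of~\cite{AndersenCL06} is local by design. Finally, ``with high probability'' follows by taking the hidden polylogarithmic factor in the $\tilde O(1/b)$ trials large enough, the remainder of the algorithm being deterministic up to the $n^{-\Omega(1)}$ error.
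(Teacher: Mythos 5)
Your proposal is correct in outline and matches the paper's skeleton --- $\tilde O(\log n/b)$ sampled sources, an $O(\ell)$-round distribution-estimation phase, and an $O(n+D)$-round sweep over the $n-1$ prefix cuts via prefix sums collected at a root (your phase (iii) is, up to bookkeeping, exactly Lemma~\ref{lem:parallel-conductance}) --- but your phase (ii) takes a genuinely different route. The paper estimates $p_\ell(s,\cdot)$ by \emph{Monte Carlo} sampling: the source launches $K=\Theta(n^2\log n/\eps^2)$ actual random-walk tokens, nodes forward only token \emph{counts} so that polynomially many walks cause no congestion (Lemma~\ref{lem:time-randomwalk}), and a Chernoff bound (Lemma~\ref{lem:probability-accuracy}) gives $|\tilde p(i)-p(i)|\le \eps/n$ with $\eps=O(\phi^2)$, which meets the hypothesis of Theorem~\ref{thm:conductance-estimate}. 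You instead propagate the distribution \emph{deterministically} by truncated power iteration, rounding each transmitted value to a multiple of $n^{-c}$; both methods run in $O(\ell)$ rounds with $O(\log n)$-bit messages, and yours trades the with-high-probability guarantee of the sampling step for a deterministic $n^{-\Omega(1)}$ additive error, which (since $\phi\ge 1/(2m)\ge 1/n^2$) is indeed below the $O(\phi^2/n)$ accuracy the sweep analysis needs. Two smaller points. First, the sweep guarantee (Theorem~\ref{thm:conductance-estimate}) holds with \emph{constant} probability for a length chosen \emph{uniformly at random} in $\{1,\dots,O(1/\phi)\}$, not for a single fixed $\ell=\tilde\Theta(1/\phi)$; the paper amplifies by repeating over $\Theta(\log n)$ independent random lengths per source, and your proposal should do the same (the doubling schedule you use for unknown $\phi$ does not substitute for this). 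Second, your concern about vertex- versus volume-balance in source selection is legitimate but is not resolved by the paper either: the paper simply samples sources uniformly from $V$, argues that $O(\log n/b)$ trials hit the small side w.h.p., and invokes Theorem~\ref{thm:conductance-estimate} for a uniformly random source of $U$; your degree-proportional sampling is, if anything, closer to what the underlying Spielman--Teng statement requires.
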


The second algorithm is a variant of the first algorithm and based on a PageRank-based approach. The second algorithm achieves the similar running time bound as above.\\ 

Using the above results, we also show:

\begin{theorem}\label{thm:cluster}
Given an $n$-node network $G$ and  source node $s$, there is a  distributed algorithm that  outputs a {\em local} cluster in $\tilde O(\frac{1}{\phi} + n)$ rounds, where $\phi$ is the conductance of the graph. 
\end{theorem}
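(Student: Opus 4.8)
The plan is to obtain Theorem~\ref{thm:cluster} by running the PageRank-based algorithm (Section~\ref{sec:pagerank-algo}) from the prescribed source $s$ --- that is, using it in its ``local'' mode rather than to produce a global cut. Recall that this algorithm estimates the personalized PageRank vector $\mathrm{pr}_\alpha(s)$ in a distributed way: each node $t$ runs $\tilde O(1/\epsilon^2)$ random walks with reset probability $\alpha$ back to $s$, so that every node carrying non-negligible PageRank mass learns a $(1\pm\epsilon)$-accurate estimate $\tilde p(t)$ of $p(t)$, hence of $\rho_p(t)=\tilde p(t)/d(t)$, with high probability; nodes whose mass falls below a $\mathrm{poly}(1/n)$ threshold are treated as zero. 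Since $\phi$ is not known in advance, I would repeat this for geometrically decreasing values $\alpha = 1/2, 1/4, \dots, 1/\mathrm{poly}(n)$, which costs only an extra $\polylog n$ factor. The length of walk needed for the reset chain to essentially converge is $\tilde O(1/\alpha)$, which contributes the $\tilde O(1/\phi)$ term to the running time.

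Given the estimates $\rho_p(\cdot)$, I would next perform a distributed \emph{sweep cut} along the ordering $\pi_p$ (nodes sorted in decreasing $\rho_p$ order): for every prefix $S_j$ consisting of the top $j$ nodes, compute $\mathrm{vol}(S_j)$ and $|E(S_j,\bar S_j)|$, form $\phi(S_j)$, and output the prefix of minimum conductance over all $j$ and all values of $\alpha$. This is implemented over a single BFS tree of $G$ (depth $D$): the prefix volumes are prefix sums of degrees along $\pi_p$, and the crossing-edge counts come from one additional pass in which each node announces its rank in $\pi_p$ to each neighbor, so that a node contributes to the cut of $S_j$ exactly for the $j$'s strictly between the smaller and the larger of its and the neighbor's ranks. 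Pipelining these $\Theta(n)$ per-node contributions up the BFS tree with $O(\log n)$-bit messages handles all prefixes in $\tilde O(n+D)=\tilde O(n)$ rounds, which is the source of the additive $\tilde O(n)$ term (and is essentially forced by the $\tilde\Omega(\sqrt n + D)$ lower bound). The approximation guarantee of the sweep is exactly the Lov\'asz--Simonovits / Andersen--Chung--Lang local-partitioning theorem \cite{LovaszS90,SpielmanT04,AndersenCL06}: if some cluster $C\ni s$ has conductance $\phi_C$, then for all starting vertices in $C$ outside a set of small volume the $\alpha\approx\phi_C$ personalized PageRank sweep produces a set of conductance $O(\sqrt{\phi_C\log m})$; taking the minimum over the $\alpha$-schedule then yields a cut within a quadratic factor of the best local cluster containing $s$, as required by the definition of a local cluster in Section~\ref{sec:def}.

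The hard part will be pushing the Lov\'asz--Simonovits argument through the \emph{approximate}, Monte-Carlo PageRank vector that the distributed simulation actually produces, rather than the exact or deterministically truncated vector for which the sweep analysis is usually stated: I must show that the estimation error, controlled by the number of walks per node and by the absorption of negligible-mass vertices, perturbs the Lov\'asz--Simonovits curve only by a lower-order amount, so that the argmin prefix still certifies conductance $\tilde O(\sqrt{\phi})$. A secondary technical issue is keeping the sort-and-sweep CONGEST-efficient --- every aggregation (prefix sums of degrees, counting crossing edges per prefix, the final minimum over conductances and over $\alpha$) must use $O(\log n)$-bit messages and be pipelined over one tree, so that the total is $\tilde O(n)$ rather than $\tilde O(n\cdot D)$ or $\tilde O(m)$. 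The remaining pieces --- the union bound over the $\polylog n$ choices of $\alpha$ and over the random walks, and broadcasting the identity of the winning prefix --- are routine.
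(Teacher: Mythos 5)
Your overall route is sound and is essentially the one the paper develops, but you instantiate it with the \emph{PageRank-based} machinery of Section~\ref{sec:pagerank-algo} (personalized PageRank from $s$ with reset $\alpha$, then a sweep, with the approximation guarantee coming from the Andersen--Chung--Lang theorem, cf.\ Theorem~\ref{thm:pr-technical-result}), whereas the paper's own proof of Theorem~\ref{thm:cluster} simply re-runs the standard-random-walk algorithm {\sc SparseCut} (Algorithm~\ref{alg:sparsecut}) from the single prescribed source $s$, inheriting the sweep guarantee from Theorem~\ref{thm:conductance-estimate}. Both variants are developed in the paper and give the same $\tilde O(\sqrt{\phi})$ guarantee and the same $\tilde O(1/\phi+n)$ per-run cost; your distributed sweep implementation (BFS tree, prefix sums of degrees, per-prefix crossing-edge counts, pipelined upcast in $\tilde O(n+D)$ rounds) is exactly Lemma~\ref{lem:parallel-conductance}. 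The issue you flag as ``the hard part'' --- carrying the Lov\'asz--Simonovits argument through a Monte-Carlo estimate of the distribution --- is real but is black-boxed in the paper by citing \cite{SpielmanT04,DasSarmaGP09,AndersenCL06} with an explicit accuracy requirement ($\eps=O(\phi^2)$ for the random-walk version, $\eps=O(1/n^4)$ for the PageRank version), met by taking $K=\Theta(n^2\log n/\eps^2)$ (resp.\ $n^4\log n$) walks, which cost nothing extra in rounds since only counts are transmitted.

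One concrete gap in your running-time argument: you propose to run the procedure for \emph{all} $\alpha=1/2,1/4,\dots,1/\mathrm{poly}(n)$ and take the overall minimum. The run at $\alpha=1/\mathrm{poly}(n)$ alone costs $\tilde O(1/\alpha)=\tilde O(\mathrm{poly}(n))$ rounds, which exceeds the claimed $\tilde O(1/\phi+n)$ whenever $\phi\gg 1/n$ (e.g.\ on an expander the claim is $\tilde O(n)$ but your schedule would spend $\tilde O(n^2)$). The paper avoids this by a guess--check--halve loop with \emph{early termination}: after each guess it checks locally at $s$ whether the best sweep cut already certifies the quadratic approximation for the current guess, and stops at the first success; this way every executed run has walk length $O(1/\phi'), \phi'\ge\phi$, so each costs $\tilde O(1/\phi+n)$ and only $O(\log n)$ guesses are ever made. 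You need to add this stopping rule (or otherwise cap the schedule at the scale where the certificate first appears) for the stated bound to hold. With that fix, and modulo the caveat --- present in the paper's own proof as well --- that the sweep guarantee from a \emph{fixed} source $s$ only holds when $s$ is a ``good'' starting vertex of the target cluster, your argument goes through.
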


To prove the above running time bound, we derive a technical result on computing  conductances of $n$ (different) cuts in linear time (cf. Lemma \ref{lem:parallel-conductance}).  

\noindent We note that the time bound of  $\tilde O(\frac{1}{\phi} + n)$  is linear in $n$ (the number of nodes) and $1/\phi$.  From the definition of conductance (cf. Definition \ref{def:conductance}), it is clear that for every graph,
$1/\phi = O(m)$ ($m$ is the number of edges) and for many graphs it can be much smaller, e.g., for expanders it is $O(1)$. Hence, the running time of our algorithms can be significantly faster than the naive bound of $O(m)$ (cf. Section \ref{sec:distmodel}), especially in well-connected dense graphs. We next show a lower bound on the time needed for any distributed algorithm to compute a (non-trivial) sparse cut.

\begin{theorem}(cf. Section \ref{sec:lower-bound})
\label{thm:lb}
There is a $n$-node graph in which any distributed approximation algorithm  for computing sparsest cut (within any non-trivial approximation ratio)  will take $\tilde \Omega(\sqrt{n} + D)$ rounds, where $D$ is the diameter of the graph.
\end{theorem}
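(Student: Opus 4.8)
The plan is to prove the $\tilde\Omega(\sqrt n + D)$ lower bound via a reduction from two-party communication complexity, which is the standard template for lower bounds of this flavor in the CONGEST model (following the framework of Das Sarma et al. and Peleg--Rubinovich). The $\Omega(D)$ part is trivial: take a path-like graph where the cut information is spread far apart, so any algorithm needs $\Omega(D)$ rounds just for the relevant nodes to communicate; I would fold this in at the end. The substantive part is the $\tilde\Omega(\sqrt n)$ bound, and for this I would build a ``bottleneck'' graph $G$ consisting of two clusters, one controlled by Alice and one by Bob, connected to a common shared structure only through a thin bridge of roughly $\Theta(\sqrt n / \log n)$ ``highway'' nodes that form a path of length $\Theta(\sqrt n)$ (or a low-congestion spanning structure with high diameter between the two sides). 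The inputs to a hard communication problem — I would use Set-Disjointness on $\Theta(\sqrt n)$-bit strings, whose randomized communication complexity is $\Omega(\sqrt n)$ — are encoded into the presence/absence of edges inside Alice's and Bob's gadgets, in such a way that the conductance of $G$, and hence \emph{any} nontrivial approximation to the sparsest cut, reveals at least one bit of the answer to Disjointness.

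The key steps, in order: (1) Describe the gadget. Each party's side is a dense expander-like block on $\Theta(n)$ vertices; within it we plant a single sparse cut whose existence is toggled by the Disjointness instance — e.g., if the sets intersect, Alice's block has a sparse internal cut of conductance $\approx \epsilon$, and if they are disjoint it does not, so its conductance stays $\Omega(1)$. (2) Argue the conductance dichotomy: when the inputs intersect, $\phi(G) \le \epsilon$ (witnessed by the planted cut, which only needs $O(1)$ crossing edges versus $\Theta(n)$ volume); when disjoint, every cut of $G$ has conductance $\ge c$ for a constant $c$ — this requires verifying that the bridge does not itself create a sparse global cut, which is handled by making the two big blocks also share enough direct edges or by attaching both to a common third expander so the only globally-cheap cut is the planted one. (3) Set up the simulation argument: any $T$-round distributed algorithm on $G$ can be simulated by a two-party protocol in which Alice and Bob exchange, per round, only the $O(\log n)$-bit messages that cross the bridge; since there are $\Theta(\sqrt n/\log n)$ bridge edges, this is $O(\sqrt n)$ bits per round, giving a protocol with $O(T\sqrt n)$ total communication. (4) Have both parties learn the algorithm's output (the claimed sparse cut and/or its approximate conductance value), deduce whether $\phi(G) \le \epsilon$ or $\phi(G) \ge c$ — which, by choosing the approximation ratio threshold to lie strictly between $\epsilon$ and $c$, distinguishes the two cases — and thereby solve Disjointness. (5) Conclude $T\sqrt n = \Omega(\sqrt n \cdot \sqrt n)$ up to logs... wait — more carefully: the communication lower bound is $\Omega(\sqrt n)$, and the simulated communication is $O(T \sqrt n \log n)$, so this only yields $T = \tilde\Omega(1)$, which is too weak. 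The fix, which is the actual crux, is to make the bridge a \emph{path} of length $\ell = \Theta(\sqrt n)$ rather than a direct thin cut: then in $T$ rounds information crosses only $O(T)$ hops, so to affect the far side at all one needs $T = \Omega(\ell) = \Omega(\sqrt n)$; combined with a width argument (the path carries $O(\log n)$ bits per round, total $O(T\log n)$ bits must exceed the $\Omega(\sqrt n)$ Disjointness bound on inputs of size $\Theta(\sqrt n)$) one gets $T = \tilde\Omega(\sqrt n)$. This is precisely the ``bottleneck path'' construction of Das Sarma et al. for MST/shortest-path lower bounds, and I would reuse it essentially verbatim, replacing their target function with conductance.

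The main obstacle I expect is step (2): ensuring the \emph{soundness} direction, i.e., that in the ``disjoint'' case the constructed graph $G$ genuinely has conductance bounded below by a constant, so that no cut — not the planted one, not the trivial cut that separates Alice's side from Bob's along the bottleneck path, not any mixture — looks sparse. The bottleneck path is itself a witness to a smallish cut (it has few edges crossing between its two halves), so one has to be careful that its \emph{volume} is large enough, or that it is ``thickened'' (replaced by $\Theta(\sqrt n/\log n)$ parallel paths, or embedded inside an expander on $\Theta(\sqrt n)$ vertices with $\Theta(\log n)$-regular structure) so that any cut through the middle region has conductance $\Omega(1/\log n)$ or is dominated by the volume of the two big blocks — and simultaneously thin enough (in total edge-bandwidth per round) that the communication-complexity accounting still goes through. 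Balancing these two competing requirements, and then checking that the gap between the ``intersecting'' conductance and the ``disjoint'' conductance is a genuine multiplicative gap that survives the $\polylog n$ slack so the lower bound applies to \emph{any} nontrivial approximation ratio, is where the real work lies; the communication-complexity reduction itself is routine once the graph is correctly engineered.
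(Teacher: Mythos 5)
Your overall framework (CONGEST lower bound via a two-party simulation over the Das Sarma et al.\ bottleneck graph) is the right family of techniques, but the specific reduction you propose has a gap that you yourself flag and do not close, and as described it fails. In your construction the two $\Theta(n)$-volume blocks are joined only through a thin bridge of $\Theta(\sqrt{n}/\log n)$ paths of length $\Theta(\sqrt{n})$. But then the cut that severs the bridge in the middle has $O(\sqrt{n})$ crossing edges against $\Theta(n)$ volume on each side, i.e.\ conductance $O(1/\sqrt{n})$ --- \emph{regardless} of the Disjointness instance. So the ``disjoint'' case does not have conductance $\Omega(1)$, the planted cut is never the (near-)sparsest cut, and the dichotomy in your step (2) collapses. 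Your proposed fixes pull in opposite directions: adding direct Alice--Bob edges to raise the bridge's conductance destroys the communication bottleneck, while thickening the bridge to $\Theta(n)$ edges makes the simulation cost too large. A second, independent problem is the quantifier ``any non-trivial approximation ratio'': a planted cut of conductance $\epsilon$ versus $c$ only rules out approximation ratios below $c/\epsilon$, and in an unweighted graph you cannot push $\epsilon$ to $0$, so you do not get the full strength of the theorem.

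The paper avoids both issues by not engineering the conductance of the topology at all. It reduces from the \emph{spanning connected subgraph verification} problem (for which the $\tilde\Omega(\sqrt{n}+D)$ bound is already proved in \cite{DasSarmaHKKNPPW12} on the graph $G(\Gamma,d,p)$), assigning weight $1$ to the edges of the candidate subgraph $H$ and weight $0$ to all other edges of $G$. Then the sparsest cut has conductance exactly $0$ when $H$ is disconnected (cut along a component of $H$; every crossing edge has weight $0$) and strictly positive when $H$ is connected. A $0$-versus-positive gap is preserved by \emph{every} multiplicative approximation ratio, and the conductance of the underlying communication-bottleneck topology is irrelevant because only $H$'s edges carry weight. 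If you want to salvage your direct construction, you would essentially have to reintroduce weights (or otherwise manufacture a genuinely disconnected-versus-connected dichotomy), at which point you have rederived the paper's reduction; the remaining content of the proof is then entirely the black-box lower bound for verification, which you should cite rather than re-prove.
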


Since $1/\phi = \Omega(D)$ for any graph, the above lower bound says that in general, one cannot hope to improve on the $1/\phi$ term
of our upper bound.

\subsection{Outline of This Chapter}
The next two section developes the two different approach to compute sparse cuts. In Section \ref{sec:sparse-cut}, we present the standard random walk-based distributed algorithm for sparse cut problem, by introducing the main ideas. Section \ref{sec:local-cluster} describes on finding local cluster set. Then in Section \ref{sec:pagerank-algo}, we present the second algorithm using PageRank-based approach. Section \ref{sec:lower-bound} derive a general lower bound to the sparse cut computation problem. Finally, we conclude in Section \ref{sec:conclusion} by summarizing the results developed in this chapter and discuss some open problems.

% !TEX root = ipdps-main.tex
\subsection{Related Work}\label{sec:related}

The problem of finding sparse cuts on graphs has been studied extensively
\cite{BhattL84,BenczurK96,Karger00,AroraRV04,SpielmanT04,ManokaranNRS08,DasSarmaGP09}. 
Sparse cuts form an important tool for analyzing large-scale distributed networks such as the Internet and Peer-to-Peer networks, as well as large-scale graphs such as the web graph, online social communities, click graphs from search engine query logs and VLSI circuits. Sparse cuts are useful in graph clustering and partitioning among numerous other applications \cite{SpielmanT04,AndersenCL06}.

The second eigenvector of the transition matrix is an important quantity to analyze many properties of a graph. A simple way of graph partitioning is by ordering the nodes in increasing order of coordinate values in the eigenvector. This partition can be used to compute sparse cut. This is a well known approach studied in \cite{LovaszS90,LovaszS93,SpielmanT04,AndersenCL06,DasSarmaGP09}. We use this approach in this paper. The second eigenvector technique has been analyzed in many papers \cite{Alon86,Boppana87,JerrumS88}.  

Lov{\'a}sz and Simonovits \cite{LovaszS90,LovaszS93} first show how random walks can be used to find sparse cuts. Specifically,
they show that random walks of length $O(1/\phi)$ can be used to compute a cut with sparsity at most $\tilde O(\sqrt{\phi})$ if the sparsest cut has conductance $\phi$. 
Spielman and Teng \cite{SpielmanT04} mostly follow the work of Lov{\'a}sz and Simonovits, but they implement it more efficiently by sparsifying the graph. They propose a nearly linear time algorithm for finding an approximate sparsest cut with approximate balance. 
Andersen, Chung, and Lang \cite{AndersenCL06} proposed a local partitioning algorithm using PageRank vector (instead of second eigenvector) to find cuts
near a specified vertex and global cuts. The running time of their algorithm was proportional
to the size of small side of the cut. Das Sarma, Gollapudi and Panigrahy \cite{DasSarmaGP09} present an algorithm for finding sparse cut in graph streams. Their algorithm requires sub-linear
space for a certain range of parameters, but provides much a weaker approximation to the sparsest cut compared to \cite{AndersenCL06,SpielmanT04}.
Arora, Rao, and Vazirani \cite{AroraRV04} provide $O(\sqrt{\log n})$-approximation algorithm
using semi-definite programming techniques. Their algorithm gives good approximation ratio, however it is slower than algorithms based on spectral methods and
random walks. Kannan, Vempala, and Vetta \cite{KannanVV04} studied variants of spectral algorithm for clustering or partitioning a graph. 

Graph partitioning or rather clustering is an well studied optimization problem. Suppose we are given an undirected graph and a conductance parameter $\phi$. The problem of finding a partition $(S, \bar{S})$ such that $\phi(S) \leq \phi$, or conclude no such partition exits is NP-complete problem (see, \cite{LeightonR99},\cite{SimaS06}). As a result, several approximation algorithms exits in literature. Leighton and Rao presents $O(\log n)$ approximation of the sparsest cut algorithm in \cite{LeightonR99} where they used linear programming. Later Arora, Rao, and Vazirani \cite{AroraRV04} improved this to $O(\sqrt{\log n})$ using semi-definite programming techniques. This is the best known approximation of the sparsest cut computation problem. %Anisur: IS IT TRUE? 
Further, several works obtains algorithm with similar approximation guarantees algorithm but better running time such as \cite{AroraHK04}, \cite{KhandekarRV06}, \cite{AroraK07}, \cite{OrecchiaSVV08}. However, unfortunately no work have been found in distributed computing model. Our paper is the first to attempt in distributed setting for sparse cuts computation.  

The work of \cite{mihail} discusses spectral algorithms for enhancing the {\em topology awareness}, e.g., by identifying and assigning weights to {\em critical} edges of the network.  Critical edges are those that cross sparse cuts. 
 They discuss centralized
algorithms with provable performance, and introduce
decentralized heuristics with no provable guarantees. These algorithms are
based on distributed solutions of convex programs  and assign special weights to links
crossing or directed towards small cuts by minimizing
the second eigenvalue.
It is mentioned that obtaining provably efficient decentralized algorithms is an important open problem.
Our algorithms are fully
decentralized and  based on performing random walks, and so more
amenable to dynamic and self-organizing networks.

\section{A Distributed Algorithm for Sparse Cut}\label{sec:sparse-cut}
%\subsection{Algorithm for Sparse Cut}\label{sec:sparsy}
In this section, we present an algorithm to find a cut that approximates the minimum conductance $\phi$. We are given a network,  $G = (V, E)$, that has cut of conductance $\phi$ and balance $b$. We design a distributed algorithm running on $G$ to compute a cut set $S$ with conductance  $\tilde O (\sqrt{\phi})$. At the end of the algorithm, every node in $G$ will know whether it is in $S$ or $\bar{S}$. Further, each node will also know all other nodes in $S$ or $\bar{S}$. Our algorithm works in the standard CONGEST model of distributed computing (cf. Section \ref{sec:model}).    Without loss of generality, we will assume that our algorithm 
knows $\phi$ and $b$. Otherwise, we can do the following. Suppose we want to find a cut with the required sparsity, i.e., $\tilde O (\sqrt{\phi})$, without knowing
$\phi$, but assume that we know $b$ (the balance of such a cut). Then we can guess the value of $\phi$ starting from a constant (say $1/2$, which is essentially
the highest possible) and then run our algorithm and check whether the output cut value satisfies the quadratic factor approximation (and the given balance). If yes, we stop; otherwise,
we halve our guess and continue. If we don't know $b$ as well,   then our algorithm  (with some assumed balance) will still work and will give a cut with similar quadratic approximation to the minimum conductance cut that is minimum among all possible  cuts with the assumed balance. Thus, henceforth we will assume that our algorithm knows both $\phi$ and $b$.

The outline of our approach  (cf. Theorem \ref{thm:conductance-estimate}) is to try several different cuts obtained by various distributions of random walks. Further these distributions need to be computed from a {\em good} source node. A good source node is one from the smaller side of the desired cut. In this approach, instead of computing the exact distribution after the chosen length of walk, it suffices to have an approximate distribution of sufficiently high accuracy. Assuming that a good source is used, one needs to estimate the distribution after doing a random walk of length $\ell$ that is sampled uniformly in the range of $\{1, 2, \ldots, O(1/\phi)\}$. For the sampled length $\ell$, estimate the landing probability $p(i)$ at every node $i$. Assume the estimation is $\tilde p(i)$. Then arrange the nodes according to decreasing order of $\rho_{\tilde p} = \tilde p(i)/d(i)$. Suppose the order is $\pi_{\tilde p} = (1, 2,\ldots, n)$. Then, with constant probability, at least one of the $n$ cuts $(S_j, \bar{S}_j)$ has the given conductance (approximated), where $S_j = \{1, 2, \ldots, j\}$. This algorithm and its proof of correctness was given in Spielman and Teng \cite{SpielmanT04}.  To get the required cut with high probability, we run our algorithm for $\Theta(\log n)$ different lengths $\ell$, each chosen independently and uniformly at random in the range of $\{1, 2, \ldots, O(1/\phi)\}$.  For a particular $\ell$, there are $n-1$-partitions and so $n-1$ different conductances. The minimum conductance cut among all the $\Theta(n \log n)$ cuts would be the output of our algorithm. Before going to the main algorithm {\sc SparseCut}, we first present an algorithm to estimate the probability distribution $\tilde p(i)$ of $p(i)$ using random walks.  

\subsection{Estimating Random Walk Probability Distribution}\label{sec:prob-estimate} %$p_{\ell}(s, i)$} 
We focus on estimating $p_{\ell}(s, i)$ which is the probability of landing at node $i$ after a random walk of length $\ell$ from a specific source node $s$. As we noted above, we denote it by simply $p(i)$. The basic idea is to perform several random walks of length $\ell$ from $s$ and at the end, each node $i$ computes the fraction of walks that land at node $i$. It is easy to see that the accuracy of estimation is dependent on the number of random walks that are performed from $s$. Let us parameterize the number as $K$. We show (cf. Lemma \ref{lem:time-randomwalk}) that we can perform a polynomial in $n$ number of random walks without any congestion in the network. We first present the algorithm {\sc EstimateProbability}, and then describe the result on accuracy of the estimation (cf. Lemma \ref{lem:probability-accuracy}). The pseudocode of the algorithm {\sc EstimateProbability} is given below in Algorithm \ref{alg:randomwalk}. 
 
\newcommand{\mindegree}[0]{\delta}
\begin{algorithm}[H]
\caption{\sc EstimateProbability}
\label{alg:randomwalk}
\textbf{Input:} Starting node $s$, length $\ell$, and number of walks $K$.\\% = \Theta(n \log n/\eps)$.\\
\textbf{Output:} $\tilde p(i)$ for each node $i$, which is an estimate of $p(i)$ with explicit bound on additive error.\\
\begin{algorithmic}[1]
%\STATE Each node $t$ maintains a counter number $\eta_t$ to count the number of walks land over it. 

\STATE  Node $s$ creates $K$ tokens of random walks and performs them simultaneously for $\ell$ steps as follows. 

\FOR{each round from $1$ to $\ell$}   

\STATE A node holding random walk tokens, samples a random neighbor corresponding to each token and subsequently sends the appropriate {\em count} to each neighbor. (Note that tokens do not contain any node IDs.)  
%\COMMENT{$M$ is the mixing time of the graph.}
\ENDFOR

\STATE Each node $i$ counts the number of tokens that landed on it --- let this count be $\eta_i$.   

\STATE Each node estimates the probability $\tilde p(i)$ as $\frac{\eta_i}{K}$.

%\STATE Each node $t$ outputs $\tilde p(i)$.

\end{algorithmic}

\end{algorithm}

We show that for $K = \Theta(n^2 \log n/\eps^2)$, the algorithm {\sc EstimateProbability} (cf. Algorithm \ref{alg:randomwalk}) gives an estimation of $p(i)$ with accuracy $p(i) \pm \eps/n$ for each node $i$. In other words, by performing $\Theta(n^2 \log n/\eps^2)$ random walks, if $\tilde p(i)$ is an estimation for $p(i)$, then  $|\tilde p(i) - p(i)| \leq \eps/n$. This follows directly from the following lemma. %Due to space limit, we place the proof of following two lemmas in Appendix.  
\begin{lemma}\label{lem:probability-accuracy}
If the probability of an event $X$ occurring is $p$, then in $t = 4 n^2 \log n/\eps^2$ trials , the fraction of times the event $X$ occurs is $p \pm \frac{\eps}{n}$ with high probability. 
\end{lemma}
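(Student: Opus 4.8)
The plan is to recognize the statement as a direct consequence of a standard concentration inequality — the additive Chernoff--Hoeffding bound — applied to a sum of independent indicator variables. First I would set up the random variables: for $i = 1, \dots, t$, let $Y_i$ be the indicator that the event $X$ occurs in the $i$-th trial, so that $Y_1, \dots, Y_t$ are independent $\{0,1\}$-valued random variables with $\Pr[Y_i = 1] = p$. Writing $Y = \sum_{i=1}^{t} Y_i$ for the number of occurrences of $X$, the fraction of successful trials is $Y/t$, and by linearity of expectation $\e[Y] = tp$, so $\e[Y/t] = p$. The claim to establish is therefore $\Pr\left[\, \left| Y/t - p \right| > \eps/n \,\right] = n^{-\Omega(1)}$.

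Next I would invoke the Hoeffding bound in the form $\Pr\left[\, |Y - \e[Y]| \ge \lambda \,\right] \le 2\exp(-2\lambda^2/t)$, valid for a sum of $t$ independent random variables each taking values in $[0,1]$. Setting $\lambda = t\eps/n$ yields
\[
\Pr\left[\, \left| Y/t - p \right| \ge \eps/n \,\right] \;\le\; 2\exp\!\left(-\frac{2 t \eps^2}{n^2}\right).
\]
Substituting $t = 4 n^2 \log n / \eps^2$ makes the exponent equal to $-8\log n$, so the right-hand side is at most $2 n^{-8}$, which is $n^{-\Omega(1)}$; hence the empirical fraction lies in $p \pm \eps/n$ with high probability. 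I would add a remark that the \emph{additive} form of the bound is what keeps the estimate clean and, crucially, uniform in $p$: it holds no matter how small $p$ is, which matters because in the application $p = p(i)$ can be as small as $1/\mathrm{poly}(n)$, whereas a multiplicative Chernoff bound would introduce an awkward $1/p$ factor in the exponent.

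I do not anticipate a genuine obstacle; the only point needing a little care is selecting the right concentration inequality (additive Hoeffding rather than multiplicative Chernoff) so that the resulting guarantee does not degrade with small $p$, together with tracking the constants so that the failure probability is genuinely polynomially small in $n$. The constant $4$ in $t = 4 n^2 \log n / \eps^2$ is comfortably larger than needed — any constant exceeding $1/2$ already gives a high-probability bound — and the slack can be spent to boost the exponent in the failure probability as required elsewhere in the paper.
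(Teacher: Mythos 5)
Your proof is correct, but it takes a different route from the paper's. You apply the additive Hoeffding inequality $\Pr[|Y - \e[Y]| \ge \lambda] \le 2\exp(-2\lambda^2/t)$ with $\lambda = t\eps/n$, which immediately gives a failure probability of $2n^{-8}$ uniformly in $p$, with no case distinctions. The paper instead uses the multiplicative Chernoff bounds with relative deviation $\delta = \eps/(pn)$, which forces a case analysis: when $pn \le \eps$ the lower tail is vacuous and $\delta > 1$, so the weaker form $2^{-\delta t p}$ of the upper tail must be invoked, while when $pn > \eps$ both tails are controlled by $e^{-tp\delta^2/4} = e^{-\log n/p} \le 1/n$. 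Your observation that the additive form is the right tool here --- precisely because it does not degrade as $p \to 0$, whereas the multiplicative form's quality depends on $p$ and hence requires splitting into cases --- is exactly the point that makes your argument shorter and cleaner than the paper's. Both arguments yield the same conclusion with a polynomially small failure probability, so the difference is purely one of economy; your version buys a simpler, single-line calculation at no cost in generality, while the paper's version stays within the multiplicative Chernoff framework it states at the outset.
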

\begin{proof}
The proof is  follows from a Chernoff bound: $$ \Pr \left[\frac{1}{t} \sum_{i=1}^t X_i < (1 - \delta)p \right] < \left(\frac{e^{-\delta}}{(1-\delta)^{(1-\delta)}} \right)^{tp} < e^{-tp\delta^2/2}$$ and 
$$\Pr \left[\frac{1}{t} \sum_{i=1}^t X_i > (1 + \delta)p \right] < \left(\frac{e^{\delta}}{(1+ \delta)^{(1+ \delta)}} \right)^{tp}.$$ Where $X_1, X_2, \ldots, X_t$ are $t$ independent identically distributed $0-1$ random variables such that $\Pr[X_i = 1] = p$ and $\Pr[X_i = 0] = (1-p)$. The right hand side of the upper tail bound further reduces to $2^{-\delta t p}$ for $\delta > 2e -1$ and for $\delta <2e - 1$, it reduces to $e^{-tp\delta^2/4}$. 

Let us choose $t = 4n^2\log n/\eps^2$, and $\delta =  \frac{\eps}{pn}$. Consider two cases, when $pn \leq \eps$ and when $pn > \eps$. When $pn \leq \eps$ , the lower tail bound automatically holds as $pn - \eps < 0$. In this case, $\delta > 1$, so we consider the weaker bound of the upper tail bound which is $2^{- \delta t p}$. We get $2^{- \delta t p} = 2^{- \eps t/n} = 2^{- 4 n \log n/\eps} = \frac{1}{n^{(4n/\eps)}}$. Now consider the case when $pn > \eps$. Here, $\delta < 1$ is small and hence the lower and upper tail bounds are $e^{-tp\delta^2/2}$ and $e^{-tp\delta^2/4}$. Therefore, between these two, we go with the weaker bound of $e^{-tp\delta^2/4} = e^{- \frac{tp \eps^2}{4p^2n^2}} = e^{- \frac{1}{p}\log n} = 1/n^{\Theta(1)}$. 
\end{proof}

\begin{lemma}\label{lem:time-randomwalk}
Algorithm {\sc EstimateProbability} (cf. Algorithm \ref{alg:randomwalk}) finishes in $O(\ell)$ rounds, if the number of walks $K$ is at most polynomial in $n$.   
\end{lemma}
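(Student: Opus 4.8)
The plan is to show that each of the $\ell$ iterations of the \textbf{for} loop in Algorithm \ref{alg:randomwalk} can be carried out in exactly one round of CONGEST communication, so that the total round complexity is $O(\ell)$, provided $K \le n^{O(1)}$. The single structural fact that makes this work is that the random walk tokens carry no identifying information — they are completely interchangeable (as the pseudocode explicitly notes, ``tokens do not contain any node IDs''). Consequently, the entire state of the algorithm at the start of any iteration is described by just one nonnegative integer $c_v$ per node $v$, namely the number of tokens currently located at $v$, with $\sum_{v} c_v = K$.

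First I would analyze a single iteration. A node $v$ holding $c_v$ tokens assigns each token independently to a uniformly random neighbor; this is purely local computation and hence ``free'' in our model. Let $c_{v\to u}$ denote the number of tokens $v$ assigns to the edge $(v,u)$. Then $v$ needs only to send the single number $c_{v\to u}$ across edge $(v,u)$, rather than any list of tokens. Since $0 \le c_{v\to u} \le c_v \le K$, this number is encodable in $\lceil \log_2(K+1)\rceil$ bits, which is $O(\log n)$ because $K$ is polynomial in $n$; thus it fits in one CONGEST message. All edges are used simultaneously and independently, so the iteration completes in a single round, after which each node $u$ locally recomputes its new count as $\sum_{v \in N(u)} c_{v\to u}$. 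The remaining steps involve no communication: node $s$ creating $K$ tokens in Step~1 is local, and in Steps~5--6 each node $i$ locally tallies $\eta_i$ and divides by $K$. Hence the round complexity equals the number of loop iterations, i.e., $O(\ell)$.

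The one point that must be handled carefully is the congestion bound: verifying that a single $O(\log n)$-bit message per edge per round genuinely suffices. This rests entirely on the two ingredients above — token anonymity, so that only aggregate counts (not per-token data or node IDs) ever need to cross an edge, and the polynomial bound on $K$, so that $\log K = O(\log n)$. I expect this to be the main (and essentially only) obstacle; if $K$ were allowed to be super-polynomial the counts would no longer fit in one message, and one would have to pipeline the bits of each count over $O(\log K / \log n)$ rounds, inflating the bound accordingly. Since all our applications use $K = \Theta(n^2 \log n/\eps^2) = \text{poly}(n)$, the stated $O(\ell)$ bound holds.
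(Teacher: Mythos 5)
Your proposal is correct and follows essentially the same argument as the paper's proof: because tokens are anonymous, each node only sends an aggregate count per edge per step, and since $K$ is polynomial in $n$ this count fits in a single $O(\log n)$-bit CONGEST message, so each of the $\ell$ steps costs one round. Your write-up is somewhat more explicit about the per-edge encoding bound, but the underlying idea is identical.
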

\begin{proof}
To prove this, we first show that there is no congestion in the network if we perform at most a polynomial number of random walks from $s$. This follows from the algorithm that each node only needs to count the number of random walk tokens that end on it. Therefore nodes do not need to know from which source node or rather from where it receives the random walk
tokens. Hence it is not needed to send the ID of the source node with the token. Since we consider CONGEST model, a polynomial in $n$ number of token's
count (i.e., we can send count of up to a polynomial number) can be sent in one
message through each edge without any congestion. Therefore, one round is enough to perform one step of random walk for all $K$ walks in parallel, where $K$ is at most polynomial in $n$. This implies that $K$ random walks of length $\ell$ can be performed in $O(\ell)$ rounds. Hence the lemma.
\end{proof}

%\paragraph{Note:} It follows from the above approach that one can perform at most a polynomial number of random walks in parallel. Therefore, one can get much better approximation to estimate $p(i)$ in $O(\ell)$ rounds.  

\subsection{Computation of Sparse Cut}

With the probability approximation result (cf. Lemma \ref{lem:probability-accuracy}) and results from the algorithm {\sc Nibble} in \cite{SpielmanT04}, a key technical result follows (stated below). The result guarantees that one of the cuts formed by $n$-prefixes in a specific sorted order of the probability distribution $\tilde p(i)$ has sparsity $\tilde O(\sqrt{\phi})$ \cite{LovaszS93, SpielmanT04}.

\begin{theorem}\label{thm:conductance-estimate}
Let $(U, \bar{U})$ be a cut of conductance at most $\phi$ such that $|U| \leq |V|/2$. Let $\tilde p(i)$ be an estimate for the probability $p(i)$ of a random walk of length $\ell$ from a source node $s$ from $U$. Assume that $|\tilde p(i) - p(i)| \leq \eps(\sqrt{p(i)/n} + 1/n)$, where $\eps \leq o(\phi)$. Consider the $n-1$ candidate cuts obtained by ordering the vertices in decreasing order of $\rho_{\tilde p}$; each candidate cut $(S_j, \bar{S}_j)$ is obtained by setting $S_j$ equal to the set $(1, 2,\ldots,j)$. If the source node is randomly chosen from $U$ and the length is chosen randomly in the range $\{1, 2, \ldots, O(1/\phi)\}$, then with constant probability, one of these $n-1$ candidate cuts has conductance at most $\tilde O(\sqrt{\phi})$, i.e. $\phi(S_j) \leq \tilde O(\sqrt{\phi})$.    
\end{theorem}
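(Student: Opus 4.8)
The plan is to reduce Theorem~\ref{thm:conductance-estimate} to the corresponding statement for the \emph{exact} distribution $p(\cdot)$ proved by Lov\'asz--Simonovits and Spielman--Teng (the analysis underlying {\sc Nibble} in \cite{SpielmanT04}), and then argue that the additive error $|\tilde p(i)-p(i)|\le \eps(\sqrt{p(i)/n}+1/n)$ with $\eps\le o(\phi)$ is small enough that the sweep cut produced from $\rho_{\tilde p}$ inherits the same $\tilde O(\sqrt\phi)$ guarantee. First I would recall the guarantee for the exact distribution: if the source $s$ is drawn from $U$ with probability proportional to degree (or uniformly, up to the usual loss), and the walk length $\ell$ is chosen uniformly in $\{1,\dots,O(1/\phi)\}$, then with constant probability the random walk distribution $p_\ell(s,\cdot)$ is ``stuck'' in the sense that a constant fraction of its mass lies in $U$ while $\mathrm{vol}(U)$ is small; the Lov\'asz--Simonovits curve argument then shows that sweeping over the prefixes of $\pi_p$ yields a cut of conductance $\tilde O(\sqrt\phi)$. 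I would state this as the known input fact and concentrate on the perturbation.

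The core of the argument is a stability lemma for sweep cuts. The relevant potential is the Lov\'asz--Simonovits function $h_p(x)=\max_{|A|_{\mathrm{vol}}=x}\sum_{i\in A}p(i)$ (the concave ``curve''), whose behavior under one step of the lazy walk gives the $\tilde O(\sqrt\phi)$ bound. I would show that replacing $p$ by $\tilde p$ changes this curve, and the quantities $\rho_p(i)$ that determine the ordering, by an amount controlled by $\sum_i |\tilde p(i)-p(i)|$. Using the hypothesis and Cauchy--Schwarz, $\sum_i|\tilde p(i)-p(i)| \le \eps\bigl(\sum_i \sqrt{p(i)/n} + 1\bigr) \le \eps\bigl(\sqrt{\sum_i p(i)}+1\bigr)=2\eps = o(\phi)$. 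Thus the total $\ell_1$ perturbation is $o(\phi)$, which is negligible compared to the constant-fraction mass concentration that drives the analysis; in particular any threshold-based progress inequality of the form ``the curve drops by $\Omega(\phi)$ unless some prefix cut is sparse'' is robust to an additive $o(\phi)$ error. The per-coordinate form $\eps\sqrt{p(i)/n}$ is designed precisely so that on heavy vertices (large $p(i)$) the relative error is still $o(\phi)$ while on light vertices the $\eps/n$ term dominates harmlessly; I would make this explicit when bounding how much a vertex can move in the sorted order and how much that displacement can cost in the conductance of the best sweep cut. The conclusion is that among the $n-1$ prefix cuts of $\pi_{\tilde p}$, at least one has conductance $\tilde O(\sqrt\phi)$ with constant probability over the choice of $s\in U$ and $\ell$.

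The main obstacle I anticipate is the perturbation bookkeeping: the Lov\'asz--Simonovits analysis is stated for the exact lazy-walk distribution, and one must check that each inequality in that chain (the one-step contraction of the curve, the relationship between the curve and the existence of a sparse sweep cut, and the union over the $O(1/\phi)$ lengths) degrades gracefully, i.e.\ only by multiplicative $(1+o(1))$ or additive $o(\phi)$ terms, under an $\ell_1$-perturbation of size $o(\phi)$ together with the stated pointwise bound. A secondary subtlety is that reordering vertices by $\rho_{\tilde p}$ rather than $\rho_p$ can change which prefixes are tested; I would handle this by showing that any two vertices swapped by the perturbation have nearly equal $\rho$ values, so the set of achievable prefix conductances changes by at most $o(\sqrt\phi)$. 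Given these, the theorem follows by combining the known constant-probability event (good source, good length) from \cite{SpielmanT04} with the stability lemma.
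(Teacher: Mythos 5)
Your plan matches the paper's treatment: the paper offers no independent argument for this theorem, giving only a one-line citation to \cite{DasSarmaGP09} (where this exact error model $\eps(\sqrt{p(i)/n}+1/n)$ appears), to \cite{SpielmanT04}, and to the Lov{\'a}sz--Simonovits mixing result of \cite{LovaszS93} --- which is precisely the curve-plus-perturbation route you sketch. Your Cauchy--Schwarz computation giving $\sum_i|\tilde p(i)-p(i)|\le 2\eps = o(\phi)$ is correct and is the key quantitative point, and the remaining bookkeeping you flag (robustness of the one-step curve contraction and of the sweep ordering) is exactly what is carried out in the cited works.
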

\begin{proof}
The proof is shown in \cite{DasSarmaGP09} and  is implicit in \cite{SpielmanT04} and uses a random walk mixing result from \cite{LovaszS93}. %This result is also derive in  using results from \cite{SpielmanT04}.     
\end{proof}
Therefore, it follows from the above Theorem \ref{thm:conductance-estimate} that if we can estimate the probability $p(i)$ in such a way that it satisfies all the conditions as stated, then  we can find a cut with sparsity $\tilde O(\sqrt{\phi})$. We see that the algorithm {\sc EstimateProbability} estimates $p(i)$ and the error bound is given in Lemma $\ref{lem:probability-accuracy}$. By setting $\eps$ appropriately (which is $O(\phi^2)$), we can satisfy the requirement of Theorem $\ref{thm:conductance-estimate}$. We only need to choose the source node $s$, a bit carefully. The source node $s$ should be sampled from the smaller side of the cut of given conductance $\phi$ as it is required in Theorem $\ref{thm:conductance-estimate}$. But we do not have any idea  about the cut. To overcome this, we sample several source nodes from $V$ and execute the algorithm for every source node.  By sampling $\log n/b$ random nodes from $V$ gives at least one node  from the smaller side of the cut with high probability. Notice that if $b$ is constant then it is enough to choose $O(\log n)$ source nodes.

In the following lemma, we show that one can compute the conductances of $n$ cuts, obtained according to some ordering of vertices, in linear time. In particular, in this paper we use the ordering of the vertices in decreasing order of $\rho_{\tilde p}$. 

%Anisur: I am not sure if this is a right place to put the following lemma----Atish Please check. 

\begin{lemma}[\textbf{$n-1$-Cuts' Conductance}]
\label{lem:parallel-conductance}
Let $G = (V, E)$ be an undirected graph. Let $\pi = (1, 2, \ldots, n)$ be an ordering of $n$ vertices of $G$. Then computing conductances of all $n-1$ cuts $(S_j, \bar{S}_j), j = 1, 2, \ldots, n-1$, can be done in $O(n)$ rounds where $S_j = \{1, 2, \ldots, j\}$.   
\end{lemma}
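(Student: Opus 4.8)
The plan is to reduce computing all $n-1$ conductances to computing, for each $j$, the two quantities $\mbox{vol}(S_j)$ and $|E(S_j,\bar S_j)|$; from these and the known value $2m$ we recover $\phi(S_j) = |E(S_j,\bar S_j)|/\min\{\mbox{vol}(S_j), 2m-\mbox{vol}(S_j)\}$ locally at whichever node "owns" index $j$. The key observation is that both quantities admit a telescoping (prefix-sum) recurrence along the ordering $\pi = (1,2,\ldots,n)$:
\begin{align*}
\mbox{vol}(S_j) &= \mbox{vol}(S_{j-1}) + d(j),\\
|E(S_j,\bar S_j)| &= |E(S_{j-1},\bar S_{j-1})| + d(j) - 2\,\bigl|\{\,u \in S_{j-1} : (u,j)\in E\,\}\bigr|,
\end{align*}
since adding vertex $j$ to the cut side contributes its $d(j)$ incident edges to the boundary but removes (twice, once from each endpoint's perspective, or simply: converts from cut-edge to internal-edge) every edge from $j$ to a vertex already in $S_{j-1}$. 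So the only per-vertex local data each node $j$ needs is $c(j) := |\{u \in S_{j-1} : (u,j)\in E\}|$, the number of its neighbors whose rank in $\pi$ is smaller than its own.

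The steps, in order, are: (1) Disseminate the ordering $\pi$ — i.e., make every node learn its own rank and, more to the point, every node's rank — which takes $O(n)$ rounds by pipelined broadcast over a BFS tree (the ranks are $O(\log n)$-bit identifiers, so $n$ of them pipeline through in $O(n + D) = O(n)$ rounds; note $D \le n$). Once ranks are known globally, each node $j$ can compute $c(j)$ with no further communication, by comparing its rank against the ranks of its neighbors (which it learned from the broadcast). (2) Build a virtual path $1 \to 2 \to \cdots \to n$ following $\pi$; since consecutive ranks need not be adjacent in $G$, route along the BFS tree, which adds only an $O(D)=O(n)$ multiplicative-free overhead — more carefully, we do the prefix computation directly over the tree. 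Concretely, pick a root, and by a single convergecast/broadcast sweep compute the two prefix sums: think of it as each node $j$ needing $\sum_{i \le j} d(i)$ and $\sum_{i \le j}(d(i) - 2c(i))$. (3) Compute these two prefix sums over the ordering in $O(n)$ rounds: stream the pairs $(d(i), d(i)-2c(i))$ in rank order through the tree to a single coordinator node which accumulates running totals and, upon processing rank $j$, sends back (again pipelined) the pair of partial sums to node $j$; alternatively route the running total along the tree hop-by-hop in rank order. Either way this is $O(n)$ pipelined rounds. (4) Each node $j<n$ now holds $\mbox{vol}(S_j)$ and $|E(S_j,\bar S_j)|$ and computes $\phi(S_j)$ locally; optionally broadcast all $n-1$ values in $O(n)$ more rounds so every node knows the minimum.

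The main obstacle is step (3): making the prefix-sum computation respect the ordering $\pi$ rather than the graph topology, while still running in $O(n)$ total rounds despite the virtual path's edges being non-adjacent in $G$. The clean way around it is to never materialize the path: route everything through a fixed BFS tree and exploit that (a) all the inputs $\{(d(i), d(i)-2c(i))\}$ are known locally after step (1), and (b) a coordinator can collect $n$ size-$O(\log n)$ messages and scatter $n$ responses in $O(n + D)$ pipelined rounds. One should double-check that the intermediate partial sums have magnitude $\mathrm{poly}(n)$ (they are bounded by $2m = O(n^2)$), so each fits in an $O(\log n)$-bit message and the CONGEST bandwidth constraint is respected. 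This completes the plan.
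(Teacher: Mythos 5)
Your proposal is correct and follows essentially the same route as the paper: the identical telescoping recurrences (your $c(j)$ and $d(j)-2c(j)$ are exactly the paper's $L^{\pi}_j$ and $-L^{\pi}_j+R^{\pi}_j$ with $L^{\pi}_j+R^{\pi}_j=d(j)$), each node computing its smaller-rank neighbor count locally, and a pipelined convergecast of $n$ pairs of $O(\log n)$-bit values over a BFS tree in $O(n+D)=O(n)$ rounds. The only cosmetic difference is that the paper has node $1$ accumulate the prefix sums and evaluate all $n-1$ conductances locally rather than scattering partial sums back to the individual nodes.
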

\begin{proof}
Let us assume that each node in the graph knows the ordering $\pi$, i.e., each node knows its position in the ordering $\pi$. We know from definition of conductance (cf. Definition \ref{def:conductance}) that only  two values are needed, namely $|E(S, \bar{S})|$ (number of crossing edges between $S$ and $\bar{S}$) and $vol(S)$ (sum of degrees of nodes in $S$) to compute the conductance of a cut $(S, \bar{S})$. Therefore, our goal is to collect these two pieces of {\em information} of all the cuts at node $1$ and compute conductances locally. We assume that node $1$ knows $m$, the number of edges in the graph, otherwise, it can be known easily using $O(D)$ rounds by building a breadth-first tree (e.g., after leader election). Notice that the partitions $(S_j, \bar{S}_j)$ are formed by adding nodes one by one from the ordered set $\{1, 2, \ldots, n\}$ starting from the set $S_1 = \{1\}$. Suppose node $1$ has the information of $L^{\pi}_j =$ number of neighbors in $S_{j-1}$ and $R^{\pi}_j =$ number of neighbors in $\bar{S}_j$ (assuming $S_0 = $NULL) for all nodes $j = 1, 2, \ldots, n$. Then, node $1$ can easily compute the value of $|E(S_j, \bar{S}_j)|$ and $vol(S_j)$ for all partitions locally as follows: $|E(S_j, \bar{S}_j)| = |E(S_{j-1}, \bar{S}_{j-1})| - L^{\pi}_j + R^{\pi}_j$ and vol$(S_j) = $ vol$(S_{j-1}) + L^{\pi}_j + R^{\pi}_j$ and $|E(S_1, \bar{S}_1)| =$ deg$(1)$ and vol$(1) = $ deg$(1)$ (where $deg(1)$ is node 1's degree). Therefore, node $1$ starts computing from $S_1, S_2$, and so on up to $S_{n}$.  Note that $L^{\pi}_j + R^{\pi}_j = $ degree of the node $j$. We next mention how node $1$ can have all these information in linear time. This is easy: a node can compute its neighbor's position (i.e., whether its neighbor is in $S_{j-1}$ or in $\bar{S}_j$) in the ordered set $\pi$ in constant number of rounds (cf. Figure \ref{fig:n-cut}). Every node can do this computation  in parallel. It will take constant number of rounds for every node to compute $L^{\pi}_j$ and $R^{\pi}_j$. Then each node $j$ sends the information which contains its ID, $L^{\pi}_j$ and $R^{\pi}_j$ to node $1$ by upcast \cite{peleg}. This will take at most $O(n+D)$ rounds. Then node $1$ can compute conductances locally as discussed above. Therefore, total time taken is $O(n + D)$ rounds to compute all $n$ conductances. This is actually $O(n)$ rounds since $D =O(n)$. 
\end{proof}

\begin{SCfigure}[]
\centering 
\includegraphics[width=0.22\textwidth]{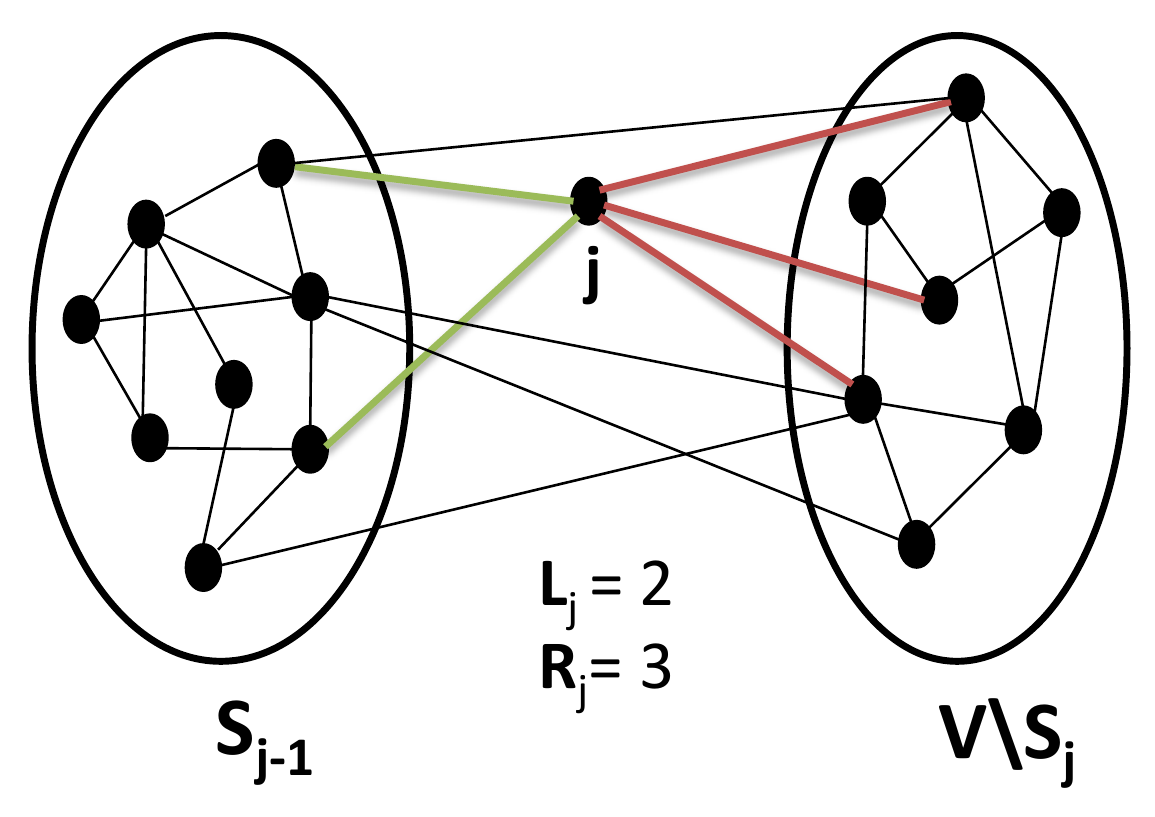}
\caption{Node $j$ computes the number of its neighbors that are in the left side and right side of $j$ in the ordered vertex set $\pi$.}
\label{fig:n-cut}
\end{SCfigure}

The algorithm and description of each step is given below. Complete pseudocode is given in Algorithm \ref{alg:sparsecut} which computes an approximate cut. At the end of our algorithm, each node knows the cut set which has sparsity $\tilde O(\sqrt{\phi})$. %We assume the mixing time $M$ of the graph is an input, as we can compute mixing time of the graph in the beginning of the algorithm using Algorithm \ref{alg:mixing-time}.        

\begin{algorithm}[!h]
%\small
\caption{\sc SparseCut}
\label{alg:sparsecut}
\textbf{Input:} Graph $G = (V, E)$, a conductance $\phi$ of a cut and balance $b$ of the cut (as mentioned in the beginning of Section \ref{sec:sparse-cut}, we assume knowledge of $\phi$ and $b$, without loss of generality).\\
\textbf{Output:} A sparse cut $C = (S, \bar{S})$ with conductance at most $\tilde O(\sqrt{\phi})$.\\

\begin{algorithmic}[1]

\FOR{$k = 1$ to $\log n/b$}
\STATE Choose a source node $s_k$ uniformly at random from $V$. 

%\STATE Source node $s_k$ calls algorithm {\sc EstimateMixingTime} to compute mixing time for source $s_k$. 

%\COMMENT{If the mixing time is an input then we do not need the above step.}

\FOR{$h = 1$ to $\log n$}
\STATE Choose a length $\ell$ uniformly at random in the range of $\{1, 2, \ldots, O(1/\phi)\}$. 

%\end{algorithmic}

\textbf{\{Phase 1: Finding partition of nodes using probability distribution by random walks.\}}

%\begin{algorithmic}%[2]

%\STATE  Choose $\log n$ source nodes $s$ randomly from $V$. Each node choose himself as a source node with probability $\log n/n$. 

\STATE Source node $s_k$ calls algorithm {\sc EstimateProbability} with input $\ell$ and $K = \Theta(\frac{n^2\log n}{\eps^2})$ to compute $p(i)$ for all nodes $i$.  

%\STATE Every node $t$ estimate probabilities $p(t) = p_{\ell}(s, t)$ using {\sc EstimateProbability} algorithm (cf. Algorithm~\ref{alg:randomwalk}) for source node $s$ and walk-length $\ell$. Let  $\tilde p(t)$ be the estimation of  $p(t)$. 

\STATE Each node sends the value $\rho(i) = \tilde p(i)/d(i)$ to all other nodes in the network. 

\STATE Let (without loss of generality) $\pi_{\tilde p} = \{1, 2, \ldots, n\}$ be the ordering of nodes in decreasing order of the set $\{\rho(i) : i \in V\}$. Each node knows $\pi_{\tilde p}$.

%\STATE The algorithm computes the conductances of each $(n-1)$ cuts $(S_j, V\setminus S_j)$, obtained by taking prefixes of the ordering in $\pi_p$, i.e. $S_j = \{1, 2,\ldots, j\}$ for $j=1, 2, \ldots, n-1$, in sequential manner as follows. Note that each node knows, in which partition they lies since they all knows $\pi_p$. %$S_j$ of the $j^{th}$ cut $(S_j, V\setminus S_j)$. 

%\end{algorithmic}

\textbf{\{Phase 2: Finding conductance of the cuts $(S_j,  \bar{S}_j)$ where $S_j = \{1, 2,\ldots, j\}$ for $j=1, 2, \ldots, n-1$ in $\pi_{\tilde p}$.\}}
%\begin{algorithmic}[1]

\STATE Consider node $1$ as master node which collects information of all $n-1$ cuts $(S_j, \bar{S}_j)$ one by one and computes conductances locally as follows.%Assume node $1$ knows the number of edges $m$ in the network, otherwise, this can be known using $O(D)$ rounds by aggregation algorithm. 

%\FOR{each node $j = 1, 2, \ldots, n-1$}

\STATE Every node $j$ does in parallel: compute $L^{\pi}_j =$ number of neighbors in $S_{j-1}$ and $R^{\pi}_j =$ number of neighbors in $\bar{S}_j$ (assuming $S_0 = $NULL). %Notice that  $L^{\pi}_j + R^{\pi}_j = $ degree of the node $j$.

%\STATE Initially, all nodes has color red. Start from node $1$, say at round $0$. Node $1$ knows the information of the cut $(S_1, V\setminus S_1)$. To compute conductance of the cut $(S_1, V\setminus S_1)$, node $1$ should know the value of $|E(S_1, V\setminus S_1)|$ (which is its degree in this case) and $\mbox{vol}(S_1)$. In fact, node $1$ knows these value. Node $1$ changes its color as green. 

%\ENDFOR

\STATE Each node $j$ sends the {\em information} which contains their ID, $L^{\pi}_j$ and $R^{\pi}_j$, to  node $1$. 

\STATE Node $1$ computes all the conductance of all $n-1$ cuts locally using this information. %The conductance of the $j$-th cut $(S_j, \bar{S}_j)$ is the ratio between $|E(S_j, \bar{S}_j)|$ and $\min \{\mbox{vol}(S_j), 2m - \mbox{vol}(S_j)\}$. Here, $|E(S_j, \bar{S}_j)| = |E(S_{j-1}, \bar{S}_{j-1})| - L^{\pi}_j + R^{\pi}_j$ and vol$(S_j) = $ vol$(S_{j-1}) + L^{\pi}_j + R^{\pi}_j$ where $|E(S_1, \bar{S}_1)| =$ degree$(1)$ and vol$(1) = $ degree$(1)$.  

\STATE $C^{\ell} \leftarrow$ cut of minimum conductance i.e., $\phi(C^{\ell}) = \min_{j= 1,2, \ldots, n-1} \phi(S_j)$. 
 
\ENDFOR

\STATE Node $1$ chooses the cut $C_{s_k}$ of minimum conductance among all $C^{\ell}$ i.e.,  $\phi(C_{s_k}) = \min_{\ell = 1}^{\log n} \phi(C^{\ell})$. 

\ENDFOR
%\end{algorithmic}

\STATE Node $1$ broadcasts the cut which has minimum conductance among all $C_{s_k}$, to all the nodes in the network. %Say the minimum cut is $C = (C_t, \bar{C}_t)$. Then it is enough for node $1$ to broadcast the ID of  node $t$ only. %He broadcast the value $k$ to all the nodes. Therefore all nodes know the cut with minimum conductance is $C_{\ell} = (S_k, V\setminus S_k)$ for $\ell$-length random walk.  

%\ENDFOR

%\STATE Output the cut $C$ which has minimum conductance among $C_{\ell}$, considering all $\ell$.

\end{algorithmic}
\end{algorithm} 

\subsection{Description and Analysis of the Algorithm}
We describe the algorithm {\sc SparseCut} in detail here. First we want to compute the probability distribution of random walks starting from a source node. It is shown in \cite{SpielmanT04} (cf. Theorem \ref{thm:conductance-estimate}) that the source node should be from the smaller side of the cut of given conductance $\phi$. Since we do not know about the cut set, we cannot choose such a source node. However, if the balance of the cut is $b$, if we choose $\log n/b$ source nodes uniformly at random from $V$, then with high probability at least one node should be from the smaller side of the cut. 
%The proof follows easily from the definition of balance of a cut. 
%Therefore, we run our algorithm for each $\log n/b$ different source node randomly chosen from $V$ to compute a cut with required sparsity.  

For each source node, we compute landing probability distribution of random walks of length $\ell$. The length could be at most $O(1/\phi)$ (cf. Theorem \ref{thm:conductance-estimate}) in the range of $\{1, 2, \ldots, O(1/\phi)\}$. As mentioned earlier, we run our algorithm for $\log n$ different lengths $\ell$, chosen uniformly at random in this range. For simplicity, we  break the remaining portion of the algorithm into two parts: Phase 1 and Phase 2. We run these two phases for each length $\ell$ and for every (chosen) source node. In Phase 1, we  partition the vertex set $V$ according to the prefixes of decreasing order of the ratio of node probability to its degree. First, source node calls the algorithm {\sc EstimateProbability} with input $\ell$ and $K$ to estimate landing probability over nodes. After computing approximate probability distribution $\tilde p(i)$, each node sends the value $\rho(i) = \tilde p(i)/d(i)$ to all other nodes in the network (cf.  proof of the Lemma~\ref{lem:phase1-sc}). Then, we arrange the set of vertices in decreasing order of $\rho(i)$, say, the ordered set is $\pi_{\tilde p} = \{1, 2, \ldots, n\}$. At the end of this phase, each node knows all the partitions $(S_j, \bar{S}_j)$, for all $ j = 1, 2, \ldots, n-1$. In phase 2, we  compute the conductances of these $n-1$ partitions. We describe in Lemma \ref{lem:parallel-conductance} on how to compute conductances of all these cuts in linear time. 

There are $n-1$ partitions (cut sets) corresponding to each $\ell$. Then node $1$ chooses the minimum of the $\Theta(n\log n)$ (among all $\ell$) minimum conductance cuts. Say, the cut is $C_{s_k}$, where $\phi(C_{s_k}) = \min_{\ell} \{\min_{j} \phi(S_j)\}$. Then node $1$ chooses the minimum conductance cut among all source nodes $s_k$ (there are total $\log n/b$) and broadcasts it to all the nodes in the network. Let the minimum cut be $C = (C_t, \bar{C}_t)$, then it is enough for node $1$ to broadcast the node $t$ only as all nodes know the ordered set $\pi_{\tilde p}$. 

\iffalse
\paragraph{Note:} In the algorithm {\sc SparseCut}, we are sending the quantity $\rho(i) = p(i)/d(i)$ of each node $i$ to all other nodes in the network (in phase~1). Then each node knows about the ordering of the vertices and therefore all the partitions. This help us to understand the picture clearly when we are computing $L^{\pi}_j$ and $R^{\pi}_j$  for each node $j$. However, it is easy to see that every node $j$ can compute $L^{\pi}_j$ and $R^{\pi}_j$ without knowing the probability distribution of all other nodes. Each node just find their neighbor's $\rho$ value and check how many of them has less value or greater value than its own $\rho$ value. This number would give the $L^{\pi}_j$ and $R^{\pi}_j$ respectively to the node $j$. Anyway, in both the cases, the overall running time bound of the {\sc SparseCut} algorithm does not change.
\fi  

\subsection{Time Complexity Analysis}
We now analyze the running time of the algorithm {\sc SparseCut}. The following lemmas are required to prove the time complexity of {\sc SparseCut}. 

\begin{lemma}\label{lem:phase1-sc}
Phase~1 of {\sc SparseCut} (cf. Algorithm \ref{alg:sparsecut}) takes $O(\frac{1}{\phi} + n)$ rounds.
\end{lemma}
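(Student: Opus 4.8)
The plan is to bound the round complexity of Phase~1 by accounting separately for its three ingredients: running \textsc{EstimateProbability} with a polynomial number of walks of length $\ell$, broadcasting each node's $\rho$-value to everyone, and having every node locally sort to obtain $\pi_{\tilde p}$. First I would invoke Lemma~\ref{lem:time-randomwalk}: since $K = \Theta(n^2\log n/\eps^2)$ is polynomial in $n$ (recall $\eps$ is a fixed polynomial in $\phi$, hence $1/\eps = \poly(n)$ in the worst case since $1/\phi = O(m) = O(n^2)$, but more importantly only polynomially many tokens are ever routed), the call to \textsc{EstimateProbability} finishes in $O(\ell)$ rounds. Because $\ell$ is chosen in the range $\{1,2,\ldots,O(1/\phi)\}$, this contributes $O(1/\phi)$ rounds.

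Next I would handle the all-to-all dissemination of the $n$ values $\{\rho(i) : i\in V\}$. Each $\rho(i) = \tilde p(i)/d(i)$ can be written using $O(\log n)$ bits (the estimate $\tilde p(i) = \eta_i/K$ has a denominator of size $\poly(n)$ and a numerator bounded by $K$, so it fits in $O(\log n)$ bits), so each node has one $O(\log n)$-bit message to deliver to all others. Gathering all $n$ such (ID, value) pairs at a single node and then broadcasting them can be done by pipelined upcast/broadcast along a BFS tree in $O(n + D) = O(n)$ rounds, exactly as in the proof of Lemma~\ref{lem:parallel-conductance}; alternatively one can cite the standard fact that $k$ tokens can be broadcast to all nodes in $O(k + D)$ rounds in CONGEST. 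Once every node holds all $n$ pairs, computing the sorted order $\pi_{\tilde p}$ is purely local and costs no rounds (ties broken by ID). Adding the two contributions gives $O(1/\phi) + O(n) = O(1/\phi + n)$ rounds, which is the claim.

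The main obstacle — and the only place any care is needed — is justifying that $K$ really is polynomial in $n$ so that Lemma~\ref{lem:time-randomwalk} applies, and that each $\rho$-value fits in a single $O(\log n)$-bit message. For the former, the subtlety is the $1/\eps^2$ factor: one must observe that $\eps$ is set to a fixed polynomial in $\phi$ (the text says $\eps = O(\phi^2)$), and since $1/\phi = O(m) = O(n^2)$ for every graph, $1/\eps^2 = O(n^8)$, so $K = O(n^{10}\log n)$ is indeed polynomial. Hence token counts stay polynomial and fit in $O(\log n)$ bits, so no congestion arises and each step of the walk costs one round, as claimed by Lemma~\ref{lem:time-randomwalk}. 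Everything else is routine bookkeeping with standard CONGEST primitives (BFS-tree construction, pipelined upcast and broadcast), so I would keep the write-up short and mostly point back to Lemmas~\ref{lem:time-randomwalk} and~\ref{lem:parallel-conductance}.
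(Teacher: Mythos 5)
Your proposal is correct and follows essentially the same route as the paper's proof: bound the \textsc{EstimateProbability} call by $O(\ell)=O(1/\phi)$ via Lemma~\ref{lem:time-randomwalk}, disseminate the $n$ values $\rho(i)$ by pipelined upcast and broadcast over a BFS tree in $O(n+D)$ rounds, note that sorting is local, and use $D=O(n)$. Your added justification that $K=\Theta(n^2\log n/\eps^2)$ remains polynomial in $n$ (since $\eps=O(\phi^2)$ and $1/\phi=O(m)$) is a welcome bit of rigor the paper leaves implicit, but it does not change the argument.
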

\begin{proof}
In phase~1, we estimate probability distribution $p(i)$  using the {\sc EstimateProbability} Algorithm. The running time of {\sc EstimateProbability}, following Lemma \ref{lem:time-randomwalk}, is $O(\ell)$ rounds. 
After estimating the landing probability, each vertex sends the quantity $\rho(i) = \tilde p(i)/d(i)$ to all vertices in the network. A simple way of sending these $n$ value to $n$ nodes can be done by constructing a BFS tree (e.g., by first electing a leader). We first construct a BFS tree using the value $\rho(t)$ of each node as its rank. Then the node of highest $\rho$ value would be the root of the tree. Each node upcasts its $\rho$ value to the root node through tree edges. Then the root node floods all $\rho(t)$ to reach all the nodes through the tree edges. It is shown in \cite{peleg} that the upcast and then flooding $n$ values through tree edges can all be done in $O(n+D)$ rounds, where $D$ is the diameter of the graph. Also constructing BFS can be done in $O(D)$ rounds (e.g., \cite{khan-podc}). 
%We note that another way of sending these $n$ different values to all nodes in $O(n + D)$ rounds, is priority based information dissemination\footnote{In each round, every node broadcasts the maximum priority information among all unused (i.e., not broadcasted earlier) information it holds in that round.}. 

All other computations are done locally. Therefore, the total time required for Phase~1 is $O(\ell + n + D)$ rounds. However, the algorithm {\sc EstimateProbability} is called for $\Theta(\log n)$ different random walk lengths, where each length value is at most $O(1/\phi)$.  Also the diameter $D$ is at most $O(n)$ for any graph. Therefore, phase~1 finishes in $O(1/\phi + n)$ rounds.  
\end{proof} 

\begin{lemma}\label{lem:phase2-sc}
Phase~2 of {\sc SparseCut} (cf. Algorithm \ref{alg:sparsecut}) takes $O(n)$ rounds.
\end{lemma}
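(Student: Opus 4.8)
The plan is to observe that Phase~2 is essentially a direct instantiation of Lemma~\ref{lem:parallel-conductance} applied to the ordering $\pi_{\tilde p}$ produced at the end of Phase~1, so most of the work has already been done; what remains is to check that feeding $\pi_{\tilde p}$ into that machinery costs only $O(n)$ rounds. Recall that at the end of Phase~1 every node has received the pair $(\mathrm{ID}_i,\rho(i))$ with $\rho(i)=\tilde p(i)/d(i)$ of every other node, and hence every node knows the full ordering $\pi_{\tilde p}=(1,2,\dots,n)$ and, in particular, its own position in it.

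First I would have each node $j$ compute $L^{\pi}_j$ (the number of its neighbors lying in $S_{j-1}$) and $R^{\pi}_j$ (the number lying in $\bar S_j$). Since $j$ knows the IDs of its neighbors from its initial input and knows $\pi_{\tilde p}$, this is a purely local computation; even without the global ordering it can be done in $O(1)$ rounds by a single exchange of positions with each neighbor, exactly as in the proof of Lemma~\ref{lem:parallel-conductance}. All $n$ nodes do this in parallel, so this step costs $O(1)$ rounds, and one records $L^{\pi}_j+R^{\pi}_j=\deg(j)$.

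Next, each node $j$ sends the triple $(\mathrm{ID}_j,L^{\pi}_j,R^{\pi}_j)$, which is $O(\log n)$ bits, up a BFS tree rooted at node~$1$. By the standard pipelined-upcast bound of \cite{peleg}, delivering these $n$ messages to the root takes $O(n+D)$ rounds; building the BFS tree (after a leader election, if needed) and letting node~$1$ learn $m$ both take $O(D)$ rounds. Node~$1$ then reconstructs, for $j=1,\dots,n-1$, the quantities $|E(S_j,\bar S_j)|$ and $\mathrm{vol}(S_j)$ via the recurrences $|E(S_j,\bar S_j)|=|E(S_{j-1},\bar S_{j-1})|-L^{\pi}_j+R^{\pi}_j$ and $\mathrm{vol}(S_j)=\mathrm{vol}(S_{j-1})+L^{\pi}_j+R^{\pi}_j$ with base case $|E(S_1,\bar S_1)|=\mathrm{vol}(S_1)=\deg(1)$, computes each conductance from Definition~\ref{def:conductance}, and takes the minimum $C^{\ell}$ over the $n-1$ cuts. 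All of this is local computation at node~$1$, which is free. Hence the whole of Phase~2 runs in $O(n+D)=O(n)$ rounds, since $D=O(n)$ for every connected graph; the $\Theta(\log n)$ repetitions over random-walk lengths $\ell$ contribute only a logarithmic factor and are absorbed exactly as in Lemma~\ref{lem:phase1-sc}, while the outer repetitions over source nodes are accounted for in the proof of Theorem~\ref{thm:algo1}.

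The only place needing genuine (if routine) care is the upcast step: I would emphasize that because each message is a fixed-size triple and no node needs the provenance of anything it forwards, the $n$ messages can be pipelined along the tree so that the $O(n+D)$ bound of \cite{peleg} applies without congestion. I do not anticipate any deeper obstacle --- Phase~2 is a bookkeeping step riding on Lemma~\ref{lem:parallel-conductance}.
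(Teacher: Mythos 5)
Your proposal is correct and follows the same route as the paper: the paper's proof of Lemma~\ref{lem:phase2-sc} simply observes that Phase~2 is an instance of Lemma~\ref{lem:parallel-conductance} and inherits its $O(n)$ bound, while you additionally unfold that lemma's argument (local computation of $L^{\pi}_j, R^{\pi}_j$, pipelined upcast in $O(n+D)$ rounds, local recurrences at node~$1$, and $D=O(n)$), all of which matches the paper's proof of Lemma~\ref{lem:parallel-conductance} verbatim in substance. No gaps.
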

\begin{proof}
Phase~2 is for computing conductance of $n-1$ cuts $(S_j, \bar{S}_j), j = 1, 2, \ldots, n-1$ where $S_j = \{1, 2, \ldots, j\}$ according to the ordering in $\pi_{\tilde p}$. Therefore, it follows from the proof of Lemma~\ref{lem:parallel-conductance} that node $1$ can compute these $(n-1)$ conductances in $O(n)$ rounds. 
\end{proof}

\begin{theorem}\label{thm:time-sparsecut}
The running time of {\sc SparseCut} (cf. Algorithm \ref{alg:sparsecut}) is $O(\frac{1}{b}(\frac{1}{\phi} + n)\log^2 n)$ rounds where $\phi$ is the conductance of the graph and $b$ is balance of the cut. 
\end{theorem}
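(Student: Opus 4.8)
The plan is to combine the per-phase running-time bounds already established (Lemmas~\ref{lem:phase1-sc} and \ref{lem:phase2-sc}) with the two nested loops in Algorithm~\ref{alg:sparsecut}, and then invoke the correctness guarantee of Theorem~\ref{thm:conductance-estimate} together with the source-sampling argument to confirm that the output cut indeed has conductance $\tilde O(\sqrt{\phi})$ with high probability. Concretely, the outer loop over $k$ runs $\log n / b$ times (one iteration per candidate source node $s_k$), and the inner loop over $h$ runs $\log n$ times (one iteration per randomly chosen walk length $\ell$). Inside a single $(k,h)$ iteration, the work consists of Phase~1 followed by Phase~2.

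First I would bound the cost of one inner iteration. By Lemma~\ref{lem:phase1-sc}, Phase~1 --- which calls {\sc EstimateProbability} with a length $\ell = O(1/\phi)$ and then disseminates the $n$ values $\rho(i)$ via a BFS tree --- takes $O(\frac1\phi + n)$ rounds; note that the $\Theta(\log n)$ factor for different lengths mentioned in that lemma is actually the inner loop here, so for a \emph{single} $\ell$ the cost is $O(\ell + n + D) = O(\frac1\phi + n)$ rounds. By Lemma~\ref{lem:phase2-sc}, Phase~2 takes $O(n)$ rounds. Hence one inner iteration costs $O(\frac1\phi + n)$ rounds. Multiplying by the $\log n$ iterations of the $h$-loop and the $\log n / b$ iterations of the $k$-loop gives $O(\frac{1}{b}(\frac1\phi + n)\log^2 n)$ rounds. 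The final broadcast of the best cut (line 17), and the auxiliary $O(D) = O(n)$ work for leader election / learning $m$, are lower-order terms and fold into this bound. That establishes the claimed running time.

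For correctness (which the theorem statement implicitly relies on via the phrase ``$\phi$ is the conductance of the graph''), I would argue as follows. Since the desired cut has balance $b$, sampling $\log n / b$ source nodes uniformly from $V$ ensures that, with high probability, at least one sampled $s_k$ lies in the smaller side $U$ of that cut --- the probability that all $\log n / b$ samples miss $U$ is at most $(1-b)^{\log n / b} \le n^{-\Theta(1)}$. Conditioned on a good source $s_k$, Lemma~\ref{lem:probability-accuracy} (applied with $\eps = \Theta(\phi^2)$, giving $K = \Theta(n^2\log n/\phi^4)$ walks, which is polynomial in $n$ so Lemma~\ref{lem:time-randomwalk} applies) yields an estimate $\tilde p$ satisfying the additive-error hypothesis of Theorem~\ref{thm:conductance-estimate}. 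That theorem then guarantees that, for a random length $\ell \in \{1,\dots,O(1/\phi)\}$, with constant probability one of the $n-1$ prefix cuts of $\pi_{\tilde p}$ has conductance $\tilde O(\sqrt\phi)$; repeating over $\log n$ independent choices of $\ell$ boosts this to high probability. Since the algorithm takes the minimum-conductance cut over all iterations, the output has conductance $\tilde O(\sqrt\phi)$ w.h.p.

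The main obstacle --- and it is mostly bookkeeping rather than a genuine difficulty --- is disentangling the $\log n$ factors: the statement of Lemma~\ref{lem:phase1-sc} already absorbs one $\log n$ (for the different lengths) into its ``Phase~1'' cost, so one must be careful not to double-count it against the $h$-loop, and then correctly attribute one further $\log n$ to the $k$-loop's $\log n / b$ source samples; the clean way to present this is to state the cost of a \emph{single} $(k,h)$ iteration as $O(\frac1\phi + n)$ and multiply by $\log n \cdot (\log n / b)$. A secondary point to check is that the dissemination steps (sending all $\rho(i)$, upcasting $(L^\pi_j, R^\pi_j)$, final broadcast) genuinely fit in $O(n+D)$ rounds in the CONGEST model, which follows from standard upcast/broadcast-tree results as cited, and that the token counts in {\sc EstimateProbability} fit in $O(\log n)$-bit messages, which holds since $K$ is polynomial in $n$.
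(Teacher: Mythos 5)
Your proposal is correct and follows essentially the same route as the paper: bound a single $(k,h)$ iteration by $O(\frac{1}{\phi}+n)$ via Lemmas~\ref{lem:phase1-sc} and \ref{lem:phase2-sc}, multiply by the $\log n$ inner and $\frac{\log n}{b}$ outer loop counts, and absorb the final $O(D)$ broadcast. Your explicit care about not double-counting the $\log n$ factor from Lemma~\ref{lem:phase1-sc} is a fair clarification of the paper's terser accounting, and the correctness discussion you append is handled by the paper separately (in the proof of Theorem~\ref{thm:algo1}) rather than here.
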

\begin{proof}
The algorithm {\sc SparseCut} essentially runs in two phases inside first two {\bf for} loops, one is for choosing source nodes and other is for choosing length of random walks. Then at the end, node $1$ performs some local computation to choose the minimum conductance cut and sends it to all other nodes. Sending this to all the nodes in the network can be done in $O(D)$ rounds, which follows from the above discussion of the algorithm. Now, for phase~1 and phase~2, we already have calculated the running time (cf. Lemma \ref{lem:phase1-sc} and Lemma \ref{lem:phase2-sc}). Therefore, adding all these time together, we get $O(\frac{\log n}{b}(1/\phi + n)\log n + D)$ rounds, where the factor $\frac{\log n}{b}$ is for the first {\bf for} loop, the factor $\log n$ for the second {\bf for} loop and last $D$ is for sending the cut information to all nodes. All other computations are dominated by this bound. Since $D$ is dominated by $n$, therefore the running time of the algorithm {\sc SparseCut} reduces to $O(\frac{1}{b}(1/\phi + n)\log^2 n)$ rounds.  
\end{proof}

Combining the above running time lemmas, we prove the main result of this section --- Theorem \ref{thm:algo1} (cf. Section \ref{sec:results}).  

\iffalse
\begin{theorem}\label{thm:main-result}
Given any $n$-node graph $G$ and a conductance $\phi$, there is an algorithm {\sc SparseCut} (cf. Algorithm \ref{alg:sparsecut}) that outputs a cut of conductance at most $\tilde O(\sqrt{\phi})$ with high probability and runs in $\tilde O(\frac{1}{b}(\frac{1}{\phi} + n))$ rounds, where $\phi$ is the conductance of the graph and $b$ is the balance of the optimal cut.
\end{theorem}
\fi

\begin{proof} (of Theorem \ref{thm:algo1})
The approximation guarantee of algorithm {\sc SparseCut}, i.e., it computes a cut with sparsity $\tilde O(\sqrt{\phi})$ follows from  Theorem~\ref{thm:conductance-estimate}. We choose $\eps = O(\phi^2)$. Moreover, we are performing random walks up to length $O(1/\phi)$. Therefore, it follows from  Theorem~\ref{thm:conductance-estimate} that our algorithm computes a cut with conductance $\tilde O(\sqrt{\phi})$. 
The running time of the algorithm follows from the above Theorem~\ref{thm:time-sparsecut} which is $\tilde O(\frac{1}{b}(\frac{1}{\phi} + n))$ rounds. 

In the {\sc SparseCut} algorithm, we are required to compute probability distributions by performing random walks from a {\em good} source node to satisfy the condition of  Theorem \ref{thm:conductance-estimate}. A source node is {\em good} if it is from the smaller side of a desired cut (as shown in \cite{SpielmanT04}). 
%That's why we are running our algorithm for $\log n/b$ source nodes, where $b$ is the balance of the cut. 
If we are interested in finding a cut of constant balance, then $b$ is constant. Therefore, as an immediate corollary,  computing a sparse cut of constant balance 
takes $\tilde O(\frac{1}{\phi} + n)$ rounds.
\end{proof}

\iffalse
\begin{corollary}\label{cor:constant-balance}
Given any graph $G$ that has a cut of constant balance and conductance (at most) $\phi$, there is an algorithm (cf. Algorithm \ref{alg:sparsecut}) that outputs a cut set of conductance at most $\tilde O(\sqrt{\phi})$ with high probability and runs in $\tilde O(\frac{1}{\phi} + n)$ rounds. 
\end{corollary}   
 \fi

The analysis of our algorithm is tight.  Consider the barbell graph $B_n$ which is a graph consisting of two cliques of size $(n-1)/2$ connected by a path of length $2$ (see, figure $2$ in \cite{AlonAKKLT11}). Consider a source node $s$ in one clique. Then to compute the smallest conductance cut (one set of which would be the clique containing $s$), the random walk starting from $s$, should reach the second clique.  This will take at least $\Theta(n^2)$ rounds, which is bounded by $\Omega(1/\phi)$. Then to collect all the  information as in Lemma \ref{lem:parallel-conductance} at the node $s$ will take $\Omega(n)$ rounds. Hence, total time required is $\Omega(1/\phi + n)$.

\section{Finding Local Cluster Set}\label{sec:local-cluster}
We describe an approach to compute a local cluster, i.e., a subset of vertices containing  a given source node $v$ such that the internal edge connections are significantly higher than the outgoing edges from it. 

Suppose a source node $s \in V$ is given. First, guess a conductance $\phi$ starting from a constant (say $1/2$, which is essentially the best possible) and then run the above {\sc SparseCut} algorithm for the particular node $s$, i.e.,  run the algorithm from Step 3 for source node $s$. Then check whether the smallest conductance satisfies the quadratic factor approximation. If yes, we stop; otherwise, we  halve the (guessed) conductance and continue. Since the minimum conductance value is $O(1/m)$, we need to do at most $O(\log n)$ guesses, as $m = O(n^2)$. The running time bound of the algorithm for computing a local cluster is stated in Theorem \ref{thm:cluster} (cf. Section \ref{sec:results}) and the proof is given below.  

%\begin{theorem}\label{thm:cluster}
%Given an $n$-node network $G$ and  source node $s$, there is a  distributed algorithm that  outputs a {\em local} cluster in $\tilde O(\frac{1}{\phi} + n)$ rounds, where $\phi$ is the conductance of the graph. 
%\end{theorem}
\begin{proof} (of Theorem \ref{thm:cluster})
We run the {\sc SparseCut} algorithm only for one specified source node. The running time of {\sc SparseCut} algorithm for a single source node is $\tilde O(1/\phi + n)$ rounds with high probability. Checking whether the smallest conductance satisfies  the quadratic factor approximation can be done locally at the source node $s$. Then we may have to run the algorithm at most $O(\log n)$ times for guessing the (best possible) conductance.  Therefore, the running time of the algorithm is $\tilde O(1/\phi + n)$ rounds with high probability. 
\end{proof}

\section{Sparse Cuts using PageRank}\label{sec:pagerank-algo}

In this section, we present another approach to compute a sparse cut of an undirected graph $G = (V, E)$. This is a variant of the first algorithm and based on PageRank computation. We derive an algorithm following \cite{AndersenCL06} and adapt it to the  CONGEST distributed computing model and obtain similar guarantees as before, i.e., a quadratic approximation.
% However, this approach can also be extended to works for finding global cut of a graph.    

Recall that in the previous section we use random walk probability distributions to find candidate partitions of the vertex set. Now instead of standard random walk, we use another well known distribution vector called {\em PageRank} to partition vertices. The PageRank of a graph (e.g., \cite{page99,anatomy+page98}) is the {\em stationary distribution} vector of the following special type of random walk: at each step of the walk, with some probability $\alpha$ it starts from a randomly chosen node and with remaining probability $1-\alpha$, it follows a randomly chosen neighbor from the current node and moves to that neighbor. The parameter $\alpha$ is called {\em reset} probability or {\em teleport} probability. The personalized PageRank (e.g., \cite{ppr-bahmani2010} and references therein) is the stationary distribution vector of a slightly modified random walk as of PageRank: In every round, instead of starting from a randomly chosen node with probability $\alpha$, the walk restarts from the source node itself and with remaining probability $1-\alpha$, the walk moves to a random neighbor from the current node. This alternative approach of graph partitioning, based on personalized PageRank vectors, was studied by Andersen et al. in \cite{AndersenCL06} in centralized setting. They show an improved result similar to Spielman et al. \cite{SpielmanT04} using personalized PageRank vectors with better approximation and running time. In this paper, we build on the results of \cite{AndersenCL06} and present a distributed algorithm to compute sparse cuts. Along the way, we also present a simple and efficient distributed algorithm to compute personalized PageRank. Throughout this section, by random walk we mean this special type of random walk unless otherwise stated. 

We next discuss estimation of personalized PageRank vectors in the distributed CONGEST model.   
First we introduce some notation.
Let ${\bf p}(q)$ denote the PageRank vector with respect to a given {\em starting} vector $q$, i.e., the starting node is chosen with distribution $q$.
The personalized PageRank vector with respect to a given node $v$ can be denoted by ${\bf p}(\chi_v)$, where the starting vector $\chi_v$ is the characteristic vector of $v$ (i.e., it is 1 at $v$'s coordinate and 0 elsewhere). We compute an $\eps$-approximate PageRank vector ${\bf \tilde p}(\chi_v)$ which is within
an additive error of $\eps$.  For technical reasons (cf. Section \ref{sec:loccut}), we take $\eps$ be $O(1/n^4)$.

\subsection{Estimating Personalized PageRank}\label{sec:pagerank-estimate} %$p_{\ell}(s, i)$} 
We derive a simple approach to estimate the personalized PageRank vector. We present a Monte Carlo based distributed algorithm for computing personalized PageRank of a graph, similar to \cite{DasSarmaMPU13}. The main idea is as follows. Perform many random walks starting from a specific source node $s$. In every round, each random walk independently  goes to a random neighbor with probability $1-\alpha$ and with the remaining probability (i.e., $\alpha$) terminates in the current node. We note that the random walk here means the personalized PageRank random walk. Since, $\alpha$ is the probability of termination of a walk in each round, the expected length of every walk is $1/\alpha$ and the length will be at most $O(\log n/\alpha)$ with high probability. During this process, every node $v$ counts the number of visits (say, $\eta_v$) of all the walks that go through it. Suppose the number of random walks starting from $s$ is $K$. Then, after termination of all walks in this process, each node $v$ computes (estimates) its personalized PageRank $p(v)$ as $\tilde p(v) = \frac{\eta_v \alpha}{K}$. Notice that $\frac{K}{\alpha}$ is the (expected) total number of visits over all $n$ nodes of all the $K$ walks. The above idea of counting the number of visits is a standard technique to approximate PageRank (see e.g., \cite{mcm-avrachenkov,ppr-bahmani2010}). We first present the algorithm in a pseudocode (cf. Algorithm \ref{alg:pr-walk}) to approximate $p(v)$ and then analyze the result on accuracy of estimation below.     

\begin{algorithm}[!h]
\caption{\sc EstimatePageRank}
\label{alg:pr-walk}
\textbf{Input:} Source node $s$, reset probability $\alpha$, and number of walks $K$.\\% = \Theta(n \log n/\alpha)$.\\
\textbf{Output:} Approximate PageRank $\tilde{p}(v)$ of each node $v$.\\
\begin{algorithmic}[1]

\STATE Node $s$ floods the value $K = n^4\log n$, the number of random walks to be performed to all other nodes.

\STATE Source node $s$ creates $K$ random walk tokens and performs these simultaneously. All walks keep moving in parallel until they TERMINATE. 

\STATE Every node maintains a counter number $\eta_v$ for counting visits of random walks to it. 

\WHILE{there is at least one (alive) token}

\STATE This is $i$-th round. Each node $v$ holding at least one token does the following: Consider each random walk token $\mathcal{C}$ held by $v$ which is received in the $(i-1)$-th round. Generate a random number $r \in [0, 1]$.

\IF{$r< \alpha$} 
\STATE Terminate the token $\mathcal{C}$.
\ELSE
\STATE Select an outgoing neighbor uniformly at random, say $u$. Add one token counter number to $T^v_u$ where the variable $T^v_u$ indicates the number of tokens (or random walks) chosen to move to the neighbor $u$ from $v$ in the $i$-th round.    
\ENDIF

\STATE Send the token's counter number $T^v_u$ to the respective outgoing neighbor $u$. 

\STATE Every node $u$ adds the total counter number ($\sum_{v \in N(u)} T^v_u$---which is  the total number of visits of random walks to $u$ in $i$-th round) to $\eta_u$.

\ENDWHILE

\STATE Each node outputs its personalized PageRank as $\frac{\eta_v \alpha}{K}$.

\end{algorithmic}

\end{algorithm}  

\subsubsection{Analysis}\label{subsec:analysis}

Now we show that the algorithm {\sc EstimatePageRank} (cf. Algorithm \ref{alg:pr-walk}) gives an estimation $\tilde p(v)$ of $p(v)$ with very high accuracy. The algorithm outputs the personalized PageRank of each node $v$ as $\tilde p(v) = \frac{\eta_v \alpha}{K}$. The correctness of the above approximation follows directly from the analysis of the Algorithm~1 in \cite{DasSarmaMPU13}. However, the algorithm of \cite{DasSarmaMPU13} is for computing the general PageRank (not personalized) (using  an approach due  to  \cite{mcm-avrachenkov}). However, it is easy to verify that the approach is equivalent for both general PageRank and personalized PageRank. This is because in general PageRank computation \cite{DasSarmaMPU13},  several random walks are performed from every node and the walks are terminated with reset probability (instead of restarting from a random node). Now for personalized PageRank, we perform several random walks from a {\em particular source node} and terminate each walk with reset probability (instead of restarting from the source node again). Therefore, in both cases, the random walks start again independently with probability $\alpha$ from source node(s). Hence, our approach also correctly outputs the personalized PageRank vector.   

It is shown in \cite{DasSarmaMPU13} that by performing total $\Theta(\log n)$ random walks from each node, we get a sharp approximation of PageRank vector with high probability. Therefore,  for personalized PageRank, it is enough   to get a good accuracy, if we perform $K = \Theta(n \log n)$ random walks from a particular source node. However, we can perform much more walks to get very high accuracy as needed here. In particular, we show later that it would be sufficient for our algorithm to perform $O(n^4 \log n)$ random walks. Below is a lemma on the running time of our algorithm.  

\begin{lemma}\label{lem:time-pr-walk}
Algorithm {\sc EstimatePageRank} (cf. Algorithm \ref{alg:pr-walk}) computes personalized PageRank in $\tilde O(\frac{1}{\alpha})$ rounds with high probability, where $\alpha$ is the reset probability.   
\end{lemma}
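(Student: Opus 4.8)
The plan is to bound the number of iterations of the \textbf{while} loop in Algorithm \ref{alg:pr-walk}, since each iteration is exactly one synchronous round (each node does local work, then sends one $O(\log n)$-size count to each neighbor, and the message arrives by the end of the round). The loop runs until no token is alive, so the running time equals the length of the longest surviving random walk. First I would recall that each personalized-PageRank token, in every round, terminates independently with probability $\alpha$; hence the length $L$ of a single walk is a geometric random variable with $\Pr[L \ge t] = (1-\alpha)^t$. Taking $t = c \log n / \alpha$ for a suitable constant $c$, we get $(1-\alpha)^t \le e^{-\alpha t} = e^{-c\log n} = n^{-c}$, so a single walk survives past $c\log n/\alpha$ rounds with probability at most $n^{-c}$.

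Next I would take a union bound over all $K = n^4 \log n$ walks: the probability that \emph{some} walk survives beyond $c\log n/\alpha$ rounds is at most $K n^{-c} = n^{4-c}\log n$, which is $n^{-\Omega(1)}$ once $c$ is chosen large enough (say $c \ge 6$). Therefore, with high probability all $K$ walks have terminated within $O(\log n / \alpha)$ rounds, and the \textbf{while} loop exits by then. I would also note that the flooding of the value $K$ in Step 1 takes $O(D) = O(n)$ rounds, but since $D = O(n)$ and $1/\alpha = \Omega(D)$ is \emph{not} guaranteed here — actually $D$ need not be dominated by $1/\alpha$ — I would either fold an additive $O(D)$ into the bound or, more cleanly, observe that $K$ is a fixed known quantity ($n^4\log n$) that every node can compute from $n$ alone (and $n$, or an upper bound on it, is assumed known), so no flooding is strictly needed; alternatively the $\tilde O(1/\alpha)$ bound is understood to suppress the lower-order $O(D)$ setup cost. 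Finally, there is no congestion issue: as argued in the spirit of Lemma \ref{lem:time-randomwalk}, tokens carry no IDs, each edge in each round transmits a single aggregated count $T^v_u$ bounded by $K = \mathrm{poly}(n)$, which fits in one $O(\log n)$-bit message, so each while-iteration is implementable in exactly one CONGEST round.

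The main obstacle — really the only subtlety — is making the high-probability statement clean: we need the \emph{maximum} over $K$ walk-lengths to be $O(\log n/\alpha)$, and because $K$ is polynomially large we must pick the constant in the walk-length cutoff large enough to beat the union bound, which is why $K = n^4\log n$ forces the cutoff constant up to around $5$ or $6$; this only changes hidden constants and is absorbed by the $\tilde O(\cdot)$ notation. A secondary point worth stating explicitly is that the per-round work at a node is polynomial (generating random numbers, updating counters, summing incoming counts), consistent with the model's stipulation that local computation is free but polynomially bounded. Putting these together gives that Algorithm \ref{alg:pr-walk} halts in $\tilde O(1/\alpha)$ rounds with high probability.
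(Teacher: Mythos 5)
Your proposal is correct and follows essentially the same route as the paper's proof: bound each walk's length by the geometric tail (the paper invokes a Chernoff bound for the same $O(\log n/\alpha)$ whp bound), union-bound over the polynomially many walks, and argue no congestion because only aggregated counts (no IDs) are sent per edge per round. Your explicit treatment of the geometric tail and of the $O(D)$ cost of flooding $K$ is slightly more careful than the paper's, but these are refinements of the same argument rather than a different approach.
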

\begin{proof}
To prove the lemma, we first show that there is no congestion in the network if the source node starts at most a polynomial (in $n$) number of random walks simultaneously. This is because, nodes are only sending the `count' number of random walk tokens in the algorithm. The process is similar to that in Section~\ref{sec:prob-estimate} where we estimate the landing probability distribution using the same technique. Hence the claim on the congestion part follows from the proof of the Lemma \ref{lem:time-randomwalk}. 

Now it is clear that the algorithm stops when all the walks terminate. Since the termination probability is $\alpha$, so in expectation after $1/\alpha$ steps, a walk terminates and with high probability (via the Chernoff bound) the walk terminates in $O(\log n/\alpha)$ rounds; by union bound \cite{MU-book-05}, all walks (since they are only polynomially many) terminate in $O(\log n/\alpha)$ rounds with high probability as well. Since all the walks are moving in parallel and there is no congestion, all the walks in the graph terminate in $O(\log n/\alpha)$ rounds whp. 
\end{proof} 

\subsection{Algorithm for Sparse Cut using PageRank}\label{sec:loccut}
We describe an algorithm to compute a sparse cut in $G$.
The idea is very similar to the previous section (cf. Algorithm \ref{alg:sparsecut}). In the previous section we used standard random walk to find the partitions of vertex set. Here we use personalized PageRank for partitioning and arrange the vertices in decreasing order of the ratio: (PageRank)/(degree of vertex). Consider $(n-1)$ partitions according to this ordering and compute conductance for each of them. Then the cut of minimum conductance is at most $\tilde O( \sqrt{\alpha})$, if we performed random walk from a specified vertex with reset probability $O(\alpha)$ (we will take $\alpha$ to be $\Theta(\phi)$). This guarantee follows from the result of \cite{AndersenCL06} stated below (in modified form for our purposes). 
%

%The above procedure always finds a cut of minimum conductance which is at most $\tilde O( \sqrt{\alpha})$, if we compute PageRank from a specified vertex with reset probability $O(\alpha)$. The guarantee follows from a result of \cite{AndersenCL06}, stated below (in modified form). 
%We note that in this section, we refer to a partition $(C, \bar{C})$ sometimes by just a cut $C$.  

\begin{theorem}[\cite{AndersenCL06}]\label{thm:pr-technical-result}
Let $C_v$ be a cut containing node $v$ with a conductance $\phi$. If $\tilde{p}$ is an $\eps$-approximation to the personalized PageRank vector ${\bf p}(\chi_v)$ \footnote{Actually,  the result holds
for a slightly different type of approximate PageRank vector defined in \cite{AndersenCL06}; nevertheless, this can be shown to be closely approximated by  $\eps$-approximate PageRank vector $\tilde{p}$ as defined here, if we choose $\eps$ small enough, i.e., $\eps = O(1/\mbox{vol}(C_v)^2)$, i.e., $O(1/n^4)$.}  computed with the reset probability $\alpha = 10\phi$,  and $\eps = O(\frac{1}{n^4})$, then the smallest conductance of $n-1$ cuts according to the ordering of vertices in decreasing order of $\tilde p(i)/d(i)$ is $\phi(\tilde p) = \tilde O(\sqrt{\phi}).$
\end{theorem}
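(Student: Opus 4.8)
The plan is to reduce the statement to the local‑partitioning analysis of Andersen, Chung, and Lang \cite{AndersenCL06}, using our Monte‑Carlo estimate $\tilde p$ from Algorithm~\ref{alg:pr-walk} in place of the ``approximate PageRank vector'' that their push operation produces. First I would record the accuracy of Algorithm~\ref{alg:pr-walk}: exactly as in \cite{DasSarmaMPU13}, with $K = n^4\log n$ personalized‑PageRank walks and $\tilde p(v) = \eta_v\alpha/K$, a Chernoff bound together with a union bound over the $n$ nodes gives $|\tilde p(v) - {\bf p}(\chi_v)(v)| \le \eps$ for every $v$, with high probability, where $\eps = O(1/n^4)$. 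Since every degree is at least $1$, this per‑coordinate additive guarantee is, on the per‑degree scale, at least as strong as the bound that the push‑based vector $p_\eps$ of \cite{AndersenCL06} satisfies, and in addition $\tilde p$ sits within coordinate‑wise distance $\eps$ of the exact personalized PageRank vector. Hence $\tilde p$ is an admissible input to the sweep analysis of \cite{AndersenCL06} once $\eps$ is taken small enough, which is precisely the content of the footnote to the theorem.

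Next I would invoke the two ingredients of the \cite{AndersenCL06} argument for the reset probability $\alpha = 10\phi$. First, there is a subset $C_\phi \subseteq C_v$ with $\mbox{vol}(C_\phi)\ge \mbox{vol}(C_v)/2$ such that for any starting node $v\in C_\phi$ the exact vector ${\bf p}(\chi_v)$ keeps all but a small constant fraction (at most $\phi/(2\alpha)\le 1/20$) of its mass inside $C_v$ and is ``well spread'' in the sense measured by the Lov\'asz--Simonovits curve \cite{LovaszS93}; the hypothesis that $C_v$ contains $v$ is used only through this membership, and in the {\sc SparseCut} algorithm it is ensured by sampling $\Theta(\log n/b)$ source nodes. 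Second, given such a vector, a sweep over the vertices ordered by decreasing $p(i)/d(i)$ contains a prefix set whose conductance is $O(\sqrt{\alpha\log m}) = O(\sqrt{\phi\log m})$; considering all $n-1$ prefixes only adds candidates, so the minimum over them is no larger.

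The step I expect to take real care is the perturbation bookkeeping that lets us run the second ingredient with $\tilde p$ rather than the exact vector. For any prefix $S_j$ of the sweep, $\mbox{vol}(S_j)$ and $|E(S_j,\bar{S}_j)|$ are exact (they do not involve $\tilde p$), so the only influence of the estimate is on the order in which vertices are swept and on the level‑set/Lov\'asz--Simonovits quantities driving the conductance bound. An additive error of $\eps$ per coordinate perturbs $p(S)$ for any vertex set $S$ by at most $\eps|S|\le \eps n$ and perturbs each rank $p(i)/d(i)$ by at most $\eps$; in the regime used by \cite{AndersenCL06} the PageRank masses compared in the potential‑drop argument differ by at least $\Omega(1/\mbox{vol}(C_v))$, so choosing $\eps = O(1/\mbox{vol}(C_v)^2) = O(1/n^4)$ makes all of these distortions lower‑order and the $O(\sqrt{\phi\log m}) = \tilde O(\sqrt\phi)$ bound goes through. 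The main obstacle is precisely to verify that this $O(1/n^4)$ slack is negligible against \emph{every} mass gap exploited in the Lov\'asz--Simonovits potential argument of \cite{AndersenCL06,LovaszS93}, not merely against one of them.
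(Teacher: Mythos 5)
The paper offers no proof of this theorem: it is imported verbatim from \cite{AndersenCL06}, and the only original content is the footnote asserting (without argument) that the push-based approximate PageRank vector of \cite{AndersenCL06} can be replaced by an $\eps$-approximate vector once $\eps = O(1/\mbox{vol}(C_v)^2)$. Your proposal follows the same route --- defer the heavy lifting to the two ACL ingredients (mass retention of ${\bf p}(\chi_v)$ inside $C_v$ for a good starting vertex, then the Lov\'asz--Simonovits sweep bound $O(\sqrt{\alpha\log m})$) and then do the perturbation bookkeeping --- but you actually attempt the translation that the paper relegates to the footnote, and you correctly restore a hypothesis the paper's statement drops: ACL's guarantee holds only for starting vertices in a subset $C_\phi\subseteq C_v$ of at least half the volume, not for an arbitrary $v\in C_v$, which is why the surrounding algorithm samples $\Theta(\log n/b)$ sources. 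Two caveats. First, your opening quantitative claim is off: with $K=n^4\log n$ walks, each contributing up to $O(\log n/\alpha)$ visits to a node, Hoeffding gives per-coordinate accuracy on the order of $\alpha\cdot(\log n/\alpha)\sqrt{\log n/K}=\tilde O(1/n^2)$, not $O(1/n^4)$; reaching $\eps=O(1/n^4)$ needs roughly $\tilde O(n^8)$ walks. This does not affect the theorem itself (which takes the $\eps$-approximation as a hypothesis), but it is the same unexamined step the paper makes when it asserts $O(n^4\log n)$ walks suffice. Second, your final perturbation step --- checking that the $O(1/n^4)$ slack is dominated by every mass gap in the Lov\'asz--Simonovits potential argument --- is exactly the part you (honestly) leave unverified; since the paper also leaves it unverified, your write-up is not weaker than the source, but neither document constitutes a complete proof of the stated claim.
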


\paragraph{Algorithm:} Similar to {\sc SparseCut} algorithm in Section \ref{sec:sparse-cut}. First we have to choose a {\em good} source node, i.e., a node $s$ from the smaller side of the cut of given conductance $\phi$. For this, we choose $O(\log n/b)$ uniformly random nodes, assuming the balance $b$ of the cut is given. This will guarantee that at least one node is from the smaller side with high probability. Then do the following for every source node $s$: compute the personalized PageRank vector using algorithm {\sc EstimatePagerank} with source node $s$ and reset probability $\alpha = 10 \phi$. Compute conductances of $(n-1)$ cuts derived from the PageRank vector as explained above. Then output the cut set with minimum conductance among all $(n - 1)\cdot \frac{\log n}{b}$ cuts. Notice that there are $(n - 1)$ cuts for one source node. Then the output cut would have sparsity $\tilde O(\sqrt{\phi})$ which follows from the Theorem~\ref{thm:pr-technical-result}. The reset probability $\alpha$ of the PageRank is chosen $10\phi$ according to the theorem \ref{thm:pr-technical-result}. The following theorem state the main result of this section.   
 
 \begin{theorem}\label{thm:pr-sparsecut}
Given any graph $G$ and a conductance at most $\phi$, there is a PageRank based algorithm that computes a cut set of conductance at most $\tilde O(\sqrt{\phi})$ with high probability in $\tilde O(\frac{1}{b}(\frac{1}{\phi} + n))$ rounds, where $b$ is the balance of the cut.  
\end{theorem}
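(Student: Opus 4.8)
\textbf{Proof proposal for Theorem~\ref{thm:pr-sparsecut}.}
The plan is to mirror the structure of the proof of Theorem~\ref{thm:algo1}, replacing the random-walk subroutine with the PageRank subroutine and invoking Theorem~\ref{thm:pr-technical-result} in place of Theorem~\ref{thm:conductance-estimate}. First I would argue correctness (the approximation guarantee). Since we do not know the smaller side of the cut of conductance $\phi$, we sample $O(\log n / b)$ source nodes uniformly at random from $V$; by a standard union-bound argument (identical to the one used for {\sc SparseCut}), at least one sampled node $s$ lies in the smaller side $C_v$ with high probability. For that good source, we run {\sc EstimatePageRank} with reset probability $\alpha = 10\phi$ and $K = n^4 \log n$ walks, which by the analysis in Section~\ref{sec:pagerank-estimate} yields an $\eps$-approximate personalized PageRank vector $\tilde p$ with $\eps = O(1/n^4)$. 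Theorem~\ref{thm:pr-technical-result} then guarantees that among the $n-1$ cuts obtained by ordering vertices in decreasing order of $\tilde p(i)/d(i)$, the one of smallest conductance has conductance $\tilde O(\sqrt{\phi})$. Taking the minimum conductance cut over all $(n-1)\cdot \frac{\log n}{b}$ candidate cuts can only do at least as well, so the output has sparsity $\tilde O(\sqrt{\phi})$ with high probability.

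Next I would analyze the running time, again phase by phase. There are two nested loops: the outer loop over $O(\log n / b)$ source nodes, and within each iteration a single PageRank computation followed by a conductance-computation phase (there is no inner loop over walk lengths as in {\sc SparseCut}, since PageRank replaces the length-averaging). For a fixed source node, computing $\tilde p$ via {\sc EstimatePageRank} takes $\tilde O(1/\alpha) = \tilde O(1/\phi)$ rounds with high probability by Lemma~\ref{lem:time-pr-walk}, since $\alpha = 10\phi$ and there is no congestion (the walks carry only counts, as in Lemma~\ref{lem:time-randomwalk}). Then each node broadcasts its value $\rho(i) = \tilde p(i)/d(i)$ and the network computes the conductances of all $n-1$ prefix cuts; by Lemma~\ref{lem:parallel-conductance} (and the identical reasoning in Lemmas~\ref{lem:phase1-sc} and \ref{lem:phase2-sc}) this takes $O(n + D) = O(n)$ rounds. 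So one iteration costs $\tilde O(\frac{1}{\phi} + n)$ rounds. Multiplying by the $O(\log n / b)$ source-node iterations and adding the final $O(D)$ rounds to broadcast the winning cut, the total is $\tilde O(\frac{1}{b}(\frac{1}{\phi} + n))$ rounds, as claimed.

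The main obstacle, and the only place where genuine care is needed, is the interface between our Monte-Carlo PageRank estimate and the precise notion of approximate PageRank used in \cite{AndersenCL06}. Theorem~\ref{thm:pr-technical-result} is stated for a particular residual-type approximate PageRank vector $p_\alpha(f-r)$, whereas {\sc EstimatePageRank} outputs an additive $\eps$-approximation in the $\ell_\infty$ (per-coordinate) sense. The footnote to Theorem~\ref{thm:pr-technical-result} asserts these coincide once $\eps = O(1/\mathrm{vol}(C_v)^2) = O(1/n^4)$; I would make this explicit by noting that an $\ell_\infty$ additive error of $O(1/n^4)$ per coordinate translates, via $[p(f)](S) \le [p(f-r)](S) + \eps\cdot\mathrm{vol}(S)$ and $\mathrm{vol}(S) = O(n^2)$, into a total error of $O(1/n^2)$ on any vertex set, which is small enough not to disturb the sweep-cut analysis. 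Everything else is a routine transcription of the {\sc SparseCut} argument, with $\phi$ playing the role it already plays there and $\alpha = \Theta(\phi)$ ensuring the walk-length and running-time bounds match.
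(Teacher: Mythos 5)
Your proposal is correct and follows essentially the same route as the paper's own proof: the same $O(\log n/b)$ source-sampling argument, the same invocation of Theorem~\ref{thm:pr-technical-result} with $\alpha = 10\phi$ and $\eps = O(1/n^4)$ for correctness, and the same phase-by-phase accounting ($\tilde O(1/\phi)$ for {\sc EstimatePageRank} plus $O(n+D)$ for the sweep-cut conductances, times the number of sources) for the running time. Your extra care about the interface between the additive $\eps$-approximation and the residual-type approximate PageRank of Andersen et al.\ is a welcome elaboration of what the paper relegates to a footnote, but it does not change the argument.
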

\begin{proof}
The algorithm runs in two phases for each of $O(\log n/b)$ source nodes. The first phase is for computing personalized PageRank and it takes $\tilde O(\frac{\log n}{\phi})$ rounds with high probability (cf. Lemma \ref{lem:time-pr-walk}). The second phase is similar to the Algorithm \ref{alg:sparsecut}. That is, computing partitions according to the PageRank,  and then computing conductances of all partitions: all this can be done in $\tilde O(n + D)$ rounds. Hence totally we have $\tilde O(1/\phi + n)$ rounds with high probability, since diameter $D \leq n$. Therefore, over all the $O(\log n/b)$ source nodes, the running time of the PageRank based algorithm is $\tilde O(\frac{1}{b}(\frac{1}{\phi} + n))$ rounds with high probability.    
\end{proof}

\section{Lower Bound}\label{sec:lower-bound} 
We derive a general lower bound for the distributed sparse cut problem. In particular, we show that there is graph in which any approximation algorithm for computing sparsest cut will take $\tilde \Omega(\sqrt{n} + D)$ rounds, where $D$ is the diameter of the graph. We use the technique of  \cite{DasSarmaHKKNPPW12} which shows almost tight lower bounds for many distributed verification and optimization problems. Their lower bound proofs rely on a bridge between communication complexity and distributed computing.
%We use their results and similar technique to show the lower bound of sparsest cut problem. 

We show a reduction from the spanning connected subgraph {\em verification problem} to the sparsest cut (optimization) problem. In the spanning connected  subgraph verification problem, given a graph $G = (V, E)$ and a  subgraph $H = (V, E')$ with $E' \subseteq E$, it is required to check whether the subgraph $H$ is a spanning connected subgraph of $G$ via a distributed algorithm. We convert the spanning connected subgraph verification problem to the sparsest cut  problem (with edge weights). In particular, we show that an $\alpha$-approximation $\epsilon$-error algorithm\footnote{A randomized algorithm $\mathcal{A}$ is $\alpha$-approximation $\epsilon$-error if for any input , the algorithm $\mathcal{A}$ outputs a solution that is at most $\alpha$ times the optimal solution of the input with probability at least $1-\epsilon$.} $\mathcal{A}$ for sparsest cut problem, can be used to solve the spanning connected subgraph verification problem using the same running time. Hence the lower bound proved in \cite{DasSarmaHKKNPPW12} (cf. Theorem 5.1) for the spanning connected subgraph verification problem (which is $\tilde \Omega(\sqrt{n} + D)$), also applies to the sparsest cut computation problem. We use the   graph $G(\Gamma, d, p)$  (this graph with parameters $\Gamma, d,$ and $p$ is defined in \cite{DasSarmaHKKNPPW12}) to show the lower bounds. We consider the same parametrized graph $G = G(\Gamma, d, p)$, which is connected by our assumption.  The reduction from the spanning connected  subgraph verification problem is direct:  In  $G$ we assign a  weight of 1 to all edges in the subgraph $H$ and weight $0$ to all other edges. Now, observe that if $H$ is not connected then the conductance of sparsest cut is $0$, since we can then partition the whole graph into two components and all the edges crossing the two components has weight $0$. On the other hand, if $H$ is connected then every cut set contains at least one edge from $H$, which implies that the conductance of the sparsest cut would be 
non-zero. Thus, any algorithm with non-trivial approximation ratio will be able to distinguish the two cases.  

Therefore, it follows that the sparsest cut computation problem has a lower bound $\tilde \Omega(\sqrt{n} + D)$.

\section{Conclusion}\label{sec:conclusion}
We presented distributed approximation algorithms for computing sparse cuts,
with provable guarantees on the conductance. For future work, one can try to improve the running time bound $\tilde O(\frac{1}{\phi} + n)$ rounds. There is previous work on performing an $\ell$ length random walk in time $\tilde O(\sqrt{\ell D)}$ rounds \cite{drw-jacm}. This can be used to potentially speed up random walks and hence reduce the ``$\frac{1}{\phi}$ part" of the time bound, since walks of that much length has to be performed. (As mentioned earlier, since $1/\phi = \Omega(D)$, this cannot be improved beyond $\Omega(D)$ because of our lower bound of $\Omega(D + \sqrt{n})$.) However, the technique in \cite{drw-jacm} may not be applicable directly here because of congestion; we need to perform many random walks to compute the landing probability distribution with high enough accuracy. One might also try to improve the ``$n$" part of the time bound  and see if we can match the $\Omega(\sqrt{n})$ lower bound. Improving this seems to depend on computing the conductance of $n-1$ different cuts in time that is sublinear in $n$, which seems harder; alternatively it may be possible to try significantly fewer than $n$ cuts in each of our distributional orders and still guarantee an approximation bound.

Our sparse cuts computation can be used to identify the crossing edges, which have been used in prior work (\cite{mihail}) to heuristically improve network search, routing, and connectivity. It will be useful to rigorously show such results with provable guarantees. 

%\newpage

  \let\oldthebibliography=\thebibliography
  \let\endoldthebibliography=\endthebibliography
  \renewenvironment{thebibliography}[1]{%
    \begin{oldthebibliography}{#1}%
      \setlength{\parskip}{0ex}%
      \setlength{\itemsep}{0ex}%
  }%
  {%
    \end{oldthebibliography}%
  }
%{ \small
{%\tiny
\bibliographystyle{abbrv}
\bibliography{Distributed-RW}
}

\end{document}